\newtheorem{theorem}{Theorem}
\newtheorem{lemma}[theorem]{Lemma}
\newtheorem{corollary}[theorem]{Corollary}
\newtheorem{proposition}{Proposition}
\newtheorem{condition}{Condition}
\newcommand{\beq}{\begin{equation}}
\newcommand{\eeq}{\end{equation}}
\newcommand{\beas}{\begin{align*}}
\newcommand{\eeas}{\end{align*}}
\newcommand{\bea}{\begin{align}}
\newcommand{\eea}{\end{align}}
\newcommand{\bei}{\begin{itemize}}
    \newcommand{\eei}{\end{itemize}}
\newcommand{\ben}{\begin{enumerate}}
    \newcommand{\een}{\end{enumerate}}
\newcommand{\bet}{\begin{theorem}}
    \newcommand{\eet}{\end{theorem}}
\newcommand{\bel}{\begin{lemma}}
    \newcommand{\eel}{\end{lemma}}
\newcommand{\bep}{\begin{proposition}}
    \newcommand{\eep}{\end{proposition}}
\newcommand{\bed}{\begin{definition}}
    \newcommand{\eed}{\end{definition}}
\newcommand{\bec}{\begin{corollary}}
    \newcommand{\eec}{\end{corollary}}
\newcommand{\bex}{\begin{example}}
    \newcommand{\eex}{\end{example}}
\newcommand{\cov}{\text{cov}}
\newcommand{\tr}{\text{tr}}
\newcommand{\var}{\text{var}}
\newcommand{\bzero}{\mathbf{0}}
\newcommand{\bone}{\mathbf{1}}
\newcommand{\bI}{\mathbf{I}}
\newcommand{\cJ}{\mathcal{J}}
\newcommand{\cN}{\mathcal{N}}
\newcommand{\ELR}{\textsc{elr}}
\newcommand{\ignore}[1]{}
\newenvironment{psmallmatrix}
  {\left(\begin{smallmatrix}}
  {\end{smallmatrix}\right)}
\title{Empirical Likelihood Inference of Variance Components in Linear Mixed-Effects Models}
\author[1]{J. Zhang}
\author[2]{W. Guo} 
\author[3]{J.S.\ Carpenter} 
\author[2,4]{Andrew Leroux} 
\author[2]{K.R.\ Merikangas} 
\author[5]{N.G.\ Martin}
\author[3]{I.B.\ Hickie}
\author[1]{H. Shou}
\author[1]{H. Li \footnote{hongzhe@pennmedicine.upenn.edu}}
\affil[1]{Department of Biostatistics, Epidemiology and Informatics, Perelman School of Medicine, University of Pennsylvania, Philadelphia, Pennsylvania 19104, U.S.A.}
\affil[2]{Genetic Epidemiology Research Branch, National Institute of Mental Health, National Institutes of Health}
\affil[3]{The University of Sydney's Brain and Mind Centre}
\affil[4]{Department of Biostatistics \& Informatics, University of Colorado}
\affil[5]{QIMR Berghofer Medical Research Institute}
\date{}                                       
\begin{document}

\maketitle

\begin{abstract}
Linear mixed-effects models are widely used in analyzing repeated measures data, including clustered and longitudinal data, where inferences of both fixed effects and variance components are of importance.  
Unlike the fixed effect inference that  has been well studied, inference on the variance components is more challenging due to null value being on the boundary and  the nuisance parameters of the fixed effects. Existing methods often require strong distributional assumptions on the random effects and random errors.  
In this paper, we develop empirical likelihood-based methods for the inference of the variance components in the presence of  fixed effects. 
A nonparametric version of the Wilks' theorem for the proposed empirical likelihood ratio  statistics for variance components  is derived. We also develop an empirical likelihood test for multiple variance components related to a sequence of correlated outcomes.  Simulation studies demonstrate that the proposed methods exhibit better type 1 error control than the commonly used likelihood ratio tests when the Gaussian distributional assumptions of the random effects are violated. We apply the methods to investigate the heritability of physical activity as measured by wearable device in the Australian Twin study and observe that such activity is heritable only in  the quantile range from 0.375 to 0.514.

\textbf{Keywords}~ Boundary value; Global test; Heritability; Nonparametric test; Wearable device data. 
\end{abstract}

\section{Introduction}
Longitudinal and clustered data commonly arise from observational  studies or clinical trials, where subjects are measured repeatedly over time or within a cluster. The repeated measures within a subject or a cluster are often correlated. To analyze such data, linear mixed-effects models that incorporate both fixed and random effects are widely used.
Many statistical methods have been developed for such linear mixed-effects models, especially methods for inference of the fixed effects. However, inference on  the variance components is less studied and often requires strong distributional assumptions on the random effects and the error terms. When the underlying distributions are known,  classical inference  methods, including  the likelihood ratio tests and the score tests, can be applied. However, these parametric methods are often restrictive and not robust  if the model assumptions are violated. 

Empirical likelihood (\textsc{el}) method, as an alternative to parametric likelihood-based methods, was first proposed by \cite{owen1988empirical} and has been applied to many statistical inference problems.  Without a prespecified distributional assumption  on the data,  \textsc{el}  methods incorporate side information through constraints or prior distributions and have favorable statistical  properties, including but not limited to Bartlett correctability, transformation invariance, better coverage accuracy of the corresponding confidence internals and greater power. A comprehensive introduction to  \textsc{el}  methods  can be found in \cite{owen2001empirical}.
 \textsc{el}  methods have been applied to inferences of mixture models \citep{zou2002empirical} and censored survival data \citep{chang2016empirical}, and  have also been considered for longitudinal data modeling.  For example, \cite{you2006block} proposed a block  \textsc{el}  method for inference of the regression parameters assuming a working independence covariance,  and  \cite{xue2007empirical} considered a semiparametric regression model, where the repeated within-subject measures are summarized as a function over time in order to address the dependence issue.  \cite{wang2010generalized} proposed a generalized  \textsc{el}  method that takes into account the within-subject correlations. 
\cite{li2013empirical} defined an empirical likelihood ratio (\ELR) test by utilizing the extended score from quadratic inference functions for longitudinal data, which does not involve direct estimation of the correlation parameters.

The  \textsc{el}  methods  mentioned above only focus on the inference of fixed effects in linear mixed effect models.
In this paper, we consider a general setting of linear mixed-effects models and develop   \textsc{el}  methods for the inference of the variance components.
Specifically, suppose there are $n$ subjects and denote by $n_i$ the number of repeated measures for the $i$th subject. For the $i$th subject, we observe a response vector $y_i\in R^{n_i}$, an $n_i\times p$ design matrix $X_i$ for the fixed effects $\beta^*\in R^p$, and $d$ $n_i\times n_i$ semi-positive design matrices $\Phi_{iq}~(q=1,\cdots,d)$ for the variance components $\theta^*\in (R_+\cup \{0\})^d$.
The general  linear mixed-effects model can be written as
\begin{equation}\label{eq:yi}
    y_i=X_i\beta^*+r_i, \quad i=1,\cdots,n,
\end{equation}
where $ r_i\in R^{n_i}$ is a zero-mean random variable with variance-covariance $H_i(\theta^*)$. 
We assume that $H_i(\theta^*)$ has a linear structure, 
\[
H_i(\theta^*) = \sum_{q=1}^d\theta^*_q\Phi_{iq}, \quad \theta^*=(\theta^*_1,\cdots,\theta^*_d)^T= (\theta^*_1,\theta^{*T}_{(1)})^T,
\]
where $\theta^*=(\theta^*_1,\cdots,\theta^*_d)^T$ is the vector of the variance components. 
We emphasize that this general setting does not require any assumptions on  the distributions of the data or the distributions of the random effects.

In many real applications, we are interested in making statistical inference on the variance components $\theta^*$ in model \eqref{eq:yi}.  For example, in the study of heritability based on twin data,  each monozygotic twin  or dizygotic twin   is treated as one cluster, and the linear variance structure can be constructed based on the twin type (see details in Section \ref{sec:realapp}). 
In the heritability analysis, a key question is whether there exists an genetic effect, which motivates us to study the inference of one of the variance components, say, $\theta^*_1$. 
We propose to develop an  \textsc{el}  based inference method for $\theta^*_1$ without any assumptions on the random components. The method can effectively account for the nuisance parameters, including the unknown fixed effects $\beta^*$ and  the variance components $\theta^{*T}_{(1)}$. The key difficulty when compared to the  \textsc{el}  inference of the fixed effects is to deal with  the boundary value problem when $\theta^*_1=0$ in  local testing problem $H_0: \theta^*_1=\theta_1^0$. To solve the issues, we propose a new empirical likelihood ratio test by utilizing an unbiased estimator of $\beta^*$ under very mild conditions, and prove that the asymptotic distribution of the test statistic is a mixture of $\chi^2$ distribution. 

Motivated by heritability analysis of daily activity distribution  as measured by wearable device such as actigraphy, we also consider the setting when linear mixed-effects models are fitted to a sequence of  dependent outcomes.  The wearable device data have been increasingly collected for   continuous activity monitoring in large observational  or experimental studies \citep{burton2013activity,krane2014actigraphic}.  
In typical wearable activity tracking data, the activity is  measured at one-minute resolution over several days for a given subject. Such wearable device data with repeated measures enable us to account for day-to-day variability of the activity. 
Instead of focusing on the activity counts at any minute of the day, daily activity distribution  or the amount of time with  the activity count above a given threshold  provides a biologically meaningful measure of the activity traits.  
When the activity counts are summarized as distributions,  we can  consider the activity quantile profile as a phenotype measure. In analysis of such wearable device data, we fit a linear mixed-effects model for each of the activity level or quantile  $y_i(t)$ at $t$.  
Denote by $\theta^*(t)$ the variance components for activity profile at level $t$. We are then interested in testing the global null $H_0:\theta^*_1(t) \equiv \theta_1^0,~t\in[t_1,t_2]$.
We develop  a max-type statistic for this global testing problem. Since the numerator of the proposed empirical likelihood ratio (\ELR) tests can be rewritten as the sum of approximately independent random variables over different subjects, a random perturbation method is developed to  approximate the $p$-values of the proposed global test.

We first introduce some notation. Denote by $(A)_{-1}$ the submatrix of $A$ without the first column of $A$. 
For two vectors or matrices $A$ and $B$ of compatible dimension, define the inner product $\langle A,B\rangle=\tr(A^TB)$.
For a matrix $D_{m\times n}=(D_1,\cdots,D_n)$, where $D_i$ is the $i$th column of $D$, the vectorized $D$ is defined by $(D_1^T,\cdots,D_n^T)^T$. 
Let $E(x)$ and $\var(x)$ be the expectation and variance of a random vector $x$, and let $\cov(x,y)$ be the covariance of random vectors $x$ and $y$. When $x$ is a random matrix, $E(x)$ and $\var(x)$ represent the expectation and variance of the vectorized $x$. When $x$ and $y$ are random matrices, $\cov(x,y)$ denotes the covariance of the vectorized $x$ and vectorized $y$.
We use $a=O(b)$ to denote that $a$ and $b$ are of the same order, and $a=o(b)$ to denote that $a$ is of a smaller order than $b$.
We use  $x=O_p(y)$ to denote that $x$ and $y$ are of the same order in probability, and $x=o_p(y)$ to denote that $x$ is of a smaller order than y in probability.

\section{\ELR~test for the fixed effects $\beta^*$}\label{sec:coef}
Statistical tests for the fixed effects  in the linear mixed-effects model \eqref{eq:yi},  $H_0: \beta^*=\beta_0$,  has been well studied. We first  briefly review the subject-wise  \textsc{el}  method proposed in \cite{wang2010generalized}, where the covariance structure for each subject is considered. 

Let $\hat H_{in}$ be an estimator of $H_i$, and assume that $\hat H_{in}$ converges to some $H_i^*$ in probability uniformly over all $i = 1,\cdots,n$. One such a nonparametric sample covariance matrix $\hat H_{in}$ can be obtained using  a simple two-step procedure, including 
 estimating the residuals $\hat r_i=y_i-X_i\hat\beta$, where $\hat\beta$ is the least-squares estimator using working independence correlation matrices, and 
 solving the constrained optimization problem
$\min_{\theta\geq 0} \sum_{i=1}^n\|H_i(\theta)-\hat r_i\hat r_i^T\|_F^2$.
Let
\[
\phi_i(\beta) = X_i^T\hat H_{in}^{-1}(y_i-X_i\beta),
\]
which satisfies $E\{\phi_i(\beta)\}=0$ when $\beta$ is the true value. Denote by $p_i$ the point mass at the $i$th subject. The nonparametric empirical likelihood is defined as
\[
L_0(\beta) = \sup_{p_i}\Big\{\prod_{i=1}^np_i:p_i\geq 0,\sum_{i=1}^np_i=1,\sum_{i=1}^np_i\phi_i(\beta)=0\Big\}.
\]
Since it can be proved that $\max_\beta L_0(\beta)=1/n^n$ \citep{owen2001empirical}, \cite{wang2010generalized} proposed the \ELR~statistic
\[
\ELR_0(\beta_0) = \frac{L_0(\beta_0)}{\max_\beta L_0(\beta)}=n^n L_0(\beta_0).
\]

To obtain the asymptotic distribution of the \ELR~statistic, the following  regularity conditions are needed. 
\begin{condition}\label{cond:beta-1}
    As $n\rightarrow\infty$, $P(0\in ch\{\phi_1(\beta_0),\cdots,\phi_n(\beta_0)\})\rightarrow 1$, where $ch\{\}$ is the convex hull.
\end{condition}
\begin{condition}\label{cond:beta-2}
    The limit $\lim_{n\rightarrow\infty}n^{-1}\sum_{i=1}^nX_i^TH_i^{*-1}H_iH_i^{*-1}X_i$ exists and is positive definite.
\end{condition}
\begin{condition}\label{cond:beta-3}
    The expectation $E\|\phi_i(\beta_0)\|_2^{2+\gamma_1}$ are bounded uniformly for some $\gamma_1> 0$.
\end{condition}
\begin{condition}\label{cond:beta-4}
    Let $\hat G_{in}=\hat{H}_{in}^{-1}$ with element $\hat g_{ijk}$, $x_{ij}^T$ be the $j$th row of $X_i$, and $r_{ik}$ be the $k$th element of $r_i$.
    For each pair $i$ and $i'$ with $i, i'=1,\cdots,n$ and $i\neq i'$, $\hat g_{ijk}-\hat g_{-(i,i')jk}=O_p(n^{-1})$ and sufficient moment conditions are satisfied so that $E(\hat B_{ii'})=O(n^{-1})$ and $E(\hat B_{ii'}\hat B_{ii'}^T)=O(n^{-2})$, where $\hat g_{-(i,i')jk}$ is $\hat g_{ijk}$ but computed with all the data except for subjects $i$ and $i'$ and $\hat B_{ii'}=\sum_{j=1}^{n_i}\sum_{k=1}^{n_i}(\hat g_{ijk}-\hat g_{-(i,i')jk})x_{ij}r_{ik}$.  
\end{condition}

Conditions \ref{cond:beta-1}--\ref{cond:beta-3} are common conditions for the empirical likelihood methods \citep{owen1991empirical}. 
Condition \ref{cond:beta-4} assumes  mild constraints on $\hat H_{in}^{-1}$ to ensure that the difference between the statistic $\ELR_0(\beta_0)$ defined with $\hat H_{in}$ and the one using $H_i^*$ vanishes as $n\rightarrow\infty$.
Under these regularity conditions, the following theorem provides the asymptotic distribution of the \ELR~test $\ELR_0(\beta_0)$   \citep{wang2010generalized} under the null.
\begin{theorem}\label{th:beta1}
    Under the regularity conditions \eqref{cond:beta-1}--\eqref{cond:beta-4}, as $n\rightarrow\infty$, $-2\log \ELR_0(\beta_0) \rightarrow \chi_p^2$ in distribution under the null hypothesis $H_0: \beta^*=\beta_0$.
\end{theorem}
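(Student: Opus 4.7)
The plan is to follow the Owen-style empirical likelihood argument, adapted to accommodate the estimated weight matrix $\hat H_{in}$. First I would introduce Lagrange multipliers for the constraint $\sum_i p_i \phi_i(\beta_0) = 0$ with $\sum_i p_i = 1$. Standard calculus gives $p_i = n^{-1}\{1 + \lambda^{T}\phi_i(\beta_0)\}^{-1}$ and hence
\begin{equation*}
-2\log \ELR_0(\beta_0) \;=\; 2\sum_{i=1}^n \log\bigl\{1 + \lambda^{T}\phi_i(\beta_0)\bigr\},
\end{equation*}
where $\lambda=\lambda(\beta_0)$ solves $n^{-1}\sum_i \phi_i(\beta_0)/\{1+\lambda^{T}\phi_i(\beta_0)\}=0$. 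Condition \ref{cond:beta-1} guarantees that such a multiplier exists with probability tending to one, so the expansion is well defined.

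Second, I would derive the usual order bound $\|\lambda\|_2 = O_p(n^{-1/2})$ together with $\max_i \|\phi_i(\beta_0)\|_2 = o_p(n^{1/2})$. The latter follows from Condition \ref{cond:beta-3} and a standard moment-truncation argument. Combining these, $\max_i |\lambda^{T}\phi_i(\beta_0)| = o_p(1)$, so Taylor-expanding the logarithm yields
\begin{equation*}
-2\log \ELR_0(\beta_0) \;=\; 2\lambda^{T}\sum_{i=1}^n \phi_i(\beta_0) - \lambda^{T}\Bigl(\sum_{i=1}^n \phi_i(\beta_0)\phi_i(\beta_0)^{T}\Bigr)\lambda + o_p(1).
\end{equation*}
Solving the multiplier equation to the same order gives $\lambda = \bigl(\sum_i \phi_i\phi_i^{T}\bigr)^{-1}\sum_i \phi_i + o_p(n^{-1/2})$, whence
\begin{equation*}
-2\log \ELR_0(\beta_0) \;=\; \Bigl(n^{-1/2}\sum_{i=1}^n \phi_i(\beta_0)\Bigr)^{T}\!\Sigma_n^{-1}\Bigl(n^{-1/2}\sum_{i=1}^n \phi_i(\beta_0)\Bigr) + o_p(1),
\end{equation*}
where $\Sigma_n = n^{-1}\sum_i \phi_i(\beta_0)\phi_i(\beta_0)^{T}$.

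Third, I would derive a CLT for $n^{-1/2}\sum_i \phi_i(\beta_0)$ and identify its limiting covariance. Under the null, $E[r_i]=0$ with $\mathrm{var}(r_i)=H_i$, so if $\hat H_{in}$ were replaced by its probability limit $H_i^*$, independence across subjects together with Condition \ref{cond:beta-2} and the Lindeberg condition implied by Condition \ref{cond:beta-3} would give asymptotic normality with covariance $\Sigma = \lim_n n^{-1}\sum_i X_i^{T} H_i^{*-1} H_i H_i^{*-1} X_i$. The same limit would be inherited by $\Sigma_n$, so the quadratic form collapses to $\chi_p^2$ by continuous mapping, since $\phi_i$ is $p$-dimensional.

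The main obstacle is showing that using $\hat H_{in}$ rather than $H_i^*$ is asymptotically negligible; this is precisely where Condition \ref{cond:beta-4} enters. Writing $\phi_i(\beta_0) - \tilde\phi_i(\beta_0) = X_i^{T}(\hat H_{in}^{-1} - H_i^{*-1}) r_i$, I would decompose each coordinate into the leave-two-out version $\hat g_{-(i,i')jk}$ plus a remainder. The leave-two-out piece is independent of $(r_i, r_{i'})$, so its contribution to $n^{-1/2}\sum_i \phi_i(\beta_0)$ is asymptotically negligible by the moment bounds of Condition \ref{cond:beta-3}. The remainder is controlled by the $O_p(n^{-1})$ bound on $\hat g_{ijk}-\hat g_{-(i,i')jk}$ and the moment bounds $E(\hat B_{ii'})=O(n^{-1})$ and $E(\hat B_{ii'}\hat B_{ii'}^{T})=O(n^{-2})$, which together imply that $n^{-1/2}\sum_i\{\phi_i(\beta_0)-\tilde\phi_i(\beta_0)\} = o_p(1)$ and, by a similar variance computation, that $\Sigma_n \to \Sigma$ in probability. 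Combining the CLT, the plug-in convergence of $\Sigma_n$, and Slutsky's theorem delivers the claimed $\chi_p^2$ limit.
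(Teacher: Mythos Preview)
The paper does not actually supply a proof of this theorem: it is stated as a result of \cite{wang2010generalized} and no argument appears in the appendix (the appendix covers only Theorem~\ref{th:3}, Proposition~\ref{prop:1}, equation~\eqref{eq:zdm}, and Proposition~\ref{prop:2}). So there is no in-paper proof to compare against.

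That said, your sketch is the standard Owen-style argument and is essentially what the cited reference does. The Lagrange-multiplier representation, the order bounds $\|\lambda\|=O_p(n^{-1/2})$ and $\max_i\|\phi_i\|=o_p(n^{1/2})$, the second-order Taylor expansion producing the quadratic form in $n^{-1/2}\sum_i\phi_i$, and the Lindeberg CLT under Conditions~\ref{cond:beta-2}--\ref{cond:beta-3} are all correct. You are also right that the only nonroutine step is replacing $\hat H_{in}$ by $H_i^*$, and that Condition~\ref{cond:beta-4} is tailored precisely for this: the leave-two-out decoupling makes $\hat g_{-(i,i')jk}$ independent of $(r_i,r_{i'})$, so cross terms vanish in expectation, while the $\hat B_{ii'}$ moment bounds kill the residual. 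Your description of that step is somewhat compressed (in particular, one must be a bit careful that the leave-two-out piece, though independent of $r_i$, still differs from $G_i^*$ and requires the uniform convergence $\hat H_{in}\to H_i^*$ to be dispatched), but the structure is right.
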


The asymptotic result only requires that the $\hat H_{in}$ converge uniformly to some $H_i^*$, which may not be the true $H_i$ \citep{wang2010generalized}.
When the correlation structure is correctly specified, the estimator $\hat H_{in}$ is a consistent estimator of $H_i^*=H_i$. 
The statistic defined with the true $H_i$ is asymptotically locally most powerful among all the choices of the weight matrices.

\section{\ELR~test for the variance component $\theta_1^*$ }\label{sec:local2}
We  consider the local test $H_0:\theta^*_1=\theta_1^0$ in the framework of the empirical likelihood, including the null  $H_0: \theta_1^*=0$, which is of the most interest. 
We define $r_i=y_i-X_i\beta^*$ and $R_i=r_ir_i^T$. Since $E(r_i)=0$ and $\var(r_i)=H_i(\theta^*)$, we have 
\[
R_i = H_i(\theta^*) + \delta_i = \sum_{q=1}^d\theta^*_q\Phi_{iq}+\delta_i,
\]
where $E({\delta_i})=0$ and $\var({\delta_i})$ exists.
Since $\beta^*$ is unknown, we  first need an estimator of $\beta^*$, denoted by $\hat\beta$. One simple choice is  the least-squares estimator using the all data. Specifically, we stack the data from all subjects by denoting $X=(X_1^T,\cdots,X_n^T)^T$, $y=(y_1^T,\cdots,y_n^T)^T$, and $r=(r_1^T,\cdots,r_n^T)^T$.  Model (\ref{eq:yi}) can be rewritten as
\[
y = X\beta^* + r.
\]
Then the least-squares estimator is $\hat\beta = (X^TX)^{-1}X^Ty$. For $i= 1,\cdots,n$, let 
$
\hat r_i = y_i-X_i\hat\beta = r_i + X_i(\beta^*-\hat\beta).
$
We have
\begin{align*}
    \hat R_i & =\hat r_i\hat r_i^T = r_ir_i^T + r_i(\beta^*-\hat\beta)^TX_i^T + X_i(\beta^*-\hat\beta)r_i^T+X_i(\beta^*-\hat\beta)(\beta^*-\hat\beta)^TX_i^T \\
    & = R_i + \hat\epsilon_i  = H_i(\theta^*) + \delta_i + \hat\epsilon_i,
\end{align*}
where $\hat\epsilon_i =  r_i(\beta^*-\hat\beta)^TX_i^T + X_i(\beta^*-\hat\beta)r_i^T+X_i(\beta^*-\hat\beta)(\beta^*-\hat\beta)^TX_i^T$.

To control the rates of $E({\hat\epsilon_i})$, $\cov({r_ir_i^T},{\hat\epsilon_j})$, and $\cov({\hat\epsilon_i},{\hat\epsilon_j})$, we need the following condition, which is also commonly used for empirical likelihood methods.
\begin{condition}\label{cond:1-2}
    The expectation $E\|r_i\|_2^{4+\gamma_1}$ are bounded uniformly for some $\gamma_1> 0$.
\end{condition}
Under Condition \ref{cond:1-2}, we see that the least-squares estimator $\hat\beta$ is good enough.
\begin{proposition}\label{prop:1}
    Assume that $n^{-1}X^TX\rightarrow\Sigma$ and $n^{-1/2}X^Tr\xrightarrow{d}\eta$ as $n\rightarrow\infty$,
    where $0<\|\Sigma\|_2,\|\Sigma^{-1}\|_2<\infty$, $E \eta=0$ and $E\|\eta\|_2^4=O(1)$.
    When Condition \ref{cond:1-2} holds and $\hat\beta = (X^TX)^{-1}X^Ty$, we have
    $E({\hat\epsilon_i})=O(n^{-1}),$ $i=1,\cdots,n$, and $\cov({r_ir_i^T},{\hat\epsilon_j}), ~\cov({\hat\epsilon_i},{\hat\epsilon_j})=O(n^{-2})$, $i,j=1,\cdots,n, i\neq j$.
\end{proposition}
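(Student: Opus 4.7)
The plan is to exploit the linear form $\delta := \beta^* - \hat\beta = -(X^TX)^{-1}X^Tr = \sum_{k=1}^n w_k$, where $w_k := -(X^TX)^{-1}X_k^T r_k$, and to lean on independence of $\{r_k\}$ across subjects throughout. Since $n^{-1}X^TX\to\Sigma$ is positive definite, $\|(X^TX)^{-1}\|_2=O(n^{-1})$, and together with Condition \ref{cond:1-2} each $w_k$ has mean zero with $\|w_k\|_2=O_p(n^{-1})$, while $E\|\delta\|_2^2=O(n^{-1})$ and $E\|\delta\|_2^4=O(n^{-2})$ (the latter from $E\|\eta\|_2^4=O(1)$). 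I treat $\|X_i\|_2$ as uniformly bounded in $i$, implicit in the stated hypothesis on $X^TX$.

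For $E(\hat\epsilon_i)$, I expand each of its three summands and substitute $\delta=\sum_k w_k$. Block-diagonality of $E[rr^T]$ collapses $E[r_i\delta^T X_i^T]$ to $-H_iX_i(X^TX)^{-1}X_i^T=O(n^{-1})$, the second summand is its transpose, and the third, $E[X_i\delta\delta^T X_i^T] = X_i(X^TX)^{-1}\big[\sum_k X_k^T H_k X_k\big](X^TX)^{-1}X_i^T$, is $O(n^{-1})$ because the middle sum is $O(n)$.

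For $\cov(r_ir_i^T,\hat\epsilon_j)$ with $i\neq j$, the key device is to split $\delta=\delta^{(-i)} + w_i$, with $\delta^{(-i)}:=\sum_{k\neq i}w_k$ mean-zero and independent of $r_i$. Expanding the three summands of $\hat\epsilon_j$: every contribution built from $\delta^{(-i)}$ alone is independent of $r_ir_i^T$ and gives zero covariance; every cross term carrying a single $r_j$ factor evaluates to zero because $r_j$ is mean-zero and independent of $(r_i, w_i)$, so both the joint expectation and the product of marginals vanish. The lone survivor is $X_jw_iw_i^TX_j^T=O_p(n^{-2})$, whose covariance with $r_ir_i^T$ is $O(n^{-2})$ by Cauchy--Schwarz using $\var(r_ir_i^T)=O(1)$ from Condition \ref{cond:1-2}.

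For $\cov(\hat\epsilon_i,\hat\epsilon_j)$ I use the analogous triple split $\delta=\delta^{(-i,-j)}+w_i+w_j$ with $\delta^{(-i,-j)}$ mean-zero and independent of $(r_i,r_j)$, then enumerate the nine cross products of summands. The potentially dominant $O_p(n^{-1/2})\times O_p(n^{-1/2})$ contributions built from $\delta^{(-i,-j)}$ alone vanish by mutual independence of $r_i,r_j,\delta^{(-i,-j)}$ and their mean-zero properties. All surviving cross terms carry either two explicit factors of $(X^TX)^{-1}$ (giving $O(n^{-2})$ directly) or at least two factors from $\{w_i,w_j\}$ (giving $O(n^{-2})$ after Cauchy--Schwarz with the moment bounds of Condition \ref{cond:1-2}). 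The main obstacle is precisely this enumeration---verifying case by case that every surviving summand collects enough powers of $(X^TX)^{-1}$ or of $\{w_i,w_j\}$ to reach $n^{-2}$; the trickiest case is $\cov(r_iw_j^T X_i^T,\, r_jw_i^T X_j^T)$, where the naive bound from $r_ir_j^T r_jr_i^T$ looks $O(n^{-1})$, but the $(X^TX)^{-1}(X^TX)^{-1}$ factor hidden inside $w_iw_j^T$ delivers the required $O(n^{-2})$.
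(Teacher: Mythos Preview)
Your proposal is correct and follows essentially the same route as the paper: both arguments expand $\beta^*-\hat\beta=-(X^TX)^{-1}\sum_kX_k^Tr_k$, use cross-subject independence to kill most terms, and verify that each surviving term carries at least two factors of $(X^TX)^{-1}=O(n^{-1})$. Your leave-one-out and leave-two-out decompositions $\delta=\delta^{(-i)}+w_i$ and $\delta=\delta^{(-i,-j)}+w_i+w_j$ are just a tidier bookkeeping device for the same term-by-term enumeration the paper does by direct expansion of the sums.
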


Let $\Xi=(\Xi_{kl})_{d\times d}$ with $\Xi_{kl}=\sum_{i=1}^n\tr(\Phi_{ik}\Phi_{il})$, and  $\hat\Upsilon=(\hat\Upsilon_1,\cdots,\hat\Upsilon_d)^T$ with $\hat\Upsilon_k=\sum_{i=1}^n\tr(\Phi_{ik}\hat R_i)$.
We define
\[
\hat Z_i(\theta_1) = \tr\Bigg\{\Phi_{i1}\Bigg(\hat R_i - \Phi_{i1}\theta_1-\sum_{q=2}^d\hat{\theta}_q\Phi_{iq}\Bigg)\Bigg\},~i=1,\cdots,n,
\]
where 
\begin{equation}\label{eq:theta2}
    \hat\theta_{(1)} = (\hat\theta_2,\cdots,\hat\theta_q)^T = (\Xi^{-1})_{-1}^T\hat\Upsilon. 
\end{equation}
Since Proposition \ref{prop:1} implies $E \hat Z_{i}(\theta_1)=O(n^{-1})$ if $\theta_1$ is the true value (see \eqref{eq:hatZi} in the appendix),
we define the nonparametric likelihood as 
$$
L(\theta_1) = \max_{p_i}\Bigg\{\prod_{i=1}^{n}p_i |p_i\geq 0,\sum_{i=1}^{n}p_i=1,\sum_{i=1}^{n}p_i\hat Z_{i}(\theta_1) = 0\Bigg\}
$$
and the corresponding  \ELR~statistic as 
\begin{equation}\label{eq:elr-theta}
\ELR(\theta_1^0) = \frac{L(\theta_1^0)}{\max_{\theta_1\geq 0}L(\theta_1)}.
\end{equation}

If the true value $\theta_1^*=0$ (i.e., the null hypothesis under the case $\theta_1^0=0$), the denominator in (\ref{eq:elr-theta}) would not be $1/n^n$ as usual owing to the boundary value issue, and thus the existing results are inapplicable.
To derive the asymptotic distribution of the proposed test $\ELR(\theta_1^0)$, we assume the following condition similar to Condition \ref{cond:beta-1}.
\begin{condition}\label{cond:1-1}
    As $n\rightarrow\infty$, $P(0\in ch\{Z_1(\theta_1^0),\cdots,Z_n(\theta_1^0)\})\rightarrow 1$, where $Z_i(\theta_1^0)$ is defined as $\hat Z_i(\theta_1^0)$ with $\hat R_i$ replaced by $R_i$.
\end{condition}

Under Conditions \ref{cond:1-2} and \ref{cond:1-1}, we have the following theorem  on the asymptotic distribution of the \ELR~test under the null.
\begin{theorem}\label{th:3}
    Let $\hat c_{n}(\theta_1^0)=\hat\nu_{2n}^2(\theta_1^0)/\hat\nu_{1n}^2(\theta_1^0)$, where $\hat\nu_{1n}^2(\theta_1^0)$ is a consistent estimator of the asymptotic variance of $n^{-1/2}\sum_{i=1}^{n}\hat Z_{i}(\theta_1^0)$ and $\hat\nu_{2n}^2(\theta_1^0)=n^{-1}\sum_{i=1}^{n}\hat Z_{i}^2(\theta_1^0)$. 
    If $\theta^*_{(1)}\in R_+^{d-1}$, and Conditions  \ref{cond:1-2} and  \ref{cond:1-1}  hold, as $n\rightarrow\infty$, $\hat c_{n}(\theta_1^0)$ $\left\{-2\log \ELR(\theta_1^0)\right\}\rightarrow \chi_1^2$ in distribution when $\theta_1^0>0$, and $\hat{c}_{n}(0)\left\{-2\log \ELR(0)\right\}\rightarrow U_+^2$ in distribution, where $U\sim N(0,1)$ and $U_+=\max(U,0)$.
\end{theorem}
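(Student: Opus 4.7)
The plan is to combine the standard empirical likelihood dual expansion for the numerator with a boundary analysis of the constrained denominator, reducing everything to the limit behavior of
$S_n := n^{-1/2}\sum_{i=1}^n \hat Z_i(\theta_1^0)$.
First I would use the Lagrange dual $p_i = n^{-1}\{1+\lambda \hat Z_i(\theta_1^0)\}^{-1}$ with $\lambda$ solving $\sum_i \hat Z_i(\theta_1^0)/\{1+\lambda \hat Z_i(\theta_1^0)\}=0$; Condition~\ref{cond:1-1} guarantees feasibility with probability tending to one, and Condition~\ref{cond:1-2} together with Proposition~\ref{prop:1} controls the Taylor remainder, giving
\[
-2\log\{n^n L(\theta_1^0)\} \;=\; \frac{S_n^2}{\hat\nu_{2n}^2} + o_p(1).
\]

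Second, I would identify the limit of $S_n$ via an algebraic simplification. Because $\Xi$ is symmetric, the definition \eqref{eq:theta2} picks out the last $d-1$ components of $\tilde\theta := \Xi^{-1}\hat\Upsilon$, so $\sum_i \hat Z_i(\tilde\theta_1)=0$ and hence $\sum_i \hat Z_i(\theta_1^0) = \Xi_{11}(\tilde\theta_1 - \theta_1^0)$. Writing $\hat R_i = H_i(\theta^*) + \delta_i + \hat\epsilon_i$, Proposition~\ref{prop:1} forces the $\hat\epsilon_i$ contribution to $\sum_i \tr(\Phi_{ik}\hat R_i)$ to have variance $O(1)$ rather than $O(n)$, so under $H_0$ the leading behavior of $n^{1/2}(\tilde\theta_1-\theta_1^0) = n^{1/2}(\tilde\theta_1-\theta_1^*)$ is a linear functional of the independent mean-zero variables $\tr(\Phi_{ik}\delta_i)$, whose $(2+\gamma)$-moments are bounded by Condition~\ref{cond:1-2}; a Lindeberg CLT then gives $S_n \to N(0,\nu_1^2)$ in distribution. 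Combined with the assumed consistency of $\hat\nu_{1n}^2$ and a law of large numbers for $\hat\nu_{2n}^2$, this yields $\hat c_n\,\{-2\log(n^n L(\theta_1^0))\} \to \chi_1^2$.

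Third, I would handle the constrained denominator. Since $\sum_i \hat Z_i(\theta_1)$ is linear in $\theta_1$ with negative slope $-\Xi_{11}$, its unique root $\tilde\theta_1$ achieves $L(\tilde\theta_1) = n^{-n}$, so $\max_{\theta_1 \geq 0} L(\theta_1) = n^{-n}$ when $\tilde\theta_1 \ge 0$ and equals $L(0)$ otherwise. When $\theta_1^0 > 0$, $\tilde\theta_1 = \theta_1^0 + O_p(n^{-1/2}) > 0$ with probability tending to one, the boundary is inactive, and the first two steps deliver the $\chi_1^2$ limit. When $\theta_1^0 = 0$, the event $\{\tilde\theta_1 \ge 0\}$ coincides with $\{S_n \ge 0\}$: on this event the denominator is $n^{-n}$ and $\ELR(0) = n^n L(0)$, while on $\{S_n < 0\}$ the numerator equals the denominator and $-2\log\ELR(0) = 0$. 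Passing to the joint limit of $(S_n,\hat\nu_{1n}^2,\hat\nu_{2n}^2)$ gives $\hat c_n\,\{-2\log\ELR(0)\} \to Z^2\,\mathbf{1}\{Z \ge 0\} = U_+^2$ in distribution, with $U \sim N(0,1)$.

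The main obstacle is the second step: both the least-squares plug-in (through $\hat\epsilon_i$) and the nuisance estimator $\hat\theta_{(1)}$ induce across-subject dependence in $\hat Z_i(\theta_1^0)$, so the true asymptotic variance $\nu_1^2$ of $S_n$ generally differs from the naive second moment $\nu_2^2 = \lim n^{-1}\sum_i E Z_i^2(\theta_1^0)$. The scaling factor $\hat c_n = \hat\nu_{2n}^2/\hat\nu_{1n}^2$ is designed precisely to absorb this discrepancy, and Proposition~\ref{prop:1} is the tool that keeps the $\hat\epsilon_i$ contribution $o_p(1)$ after $n^{-1/2}$ scaling.
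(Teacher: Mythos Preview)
Your proposal is correct and follows the paper's overall architecture: the same Lagrange dual expansion for $-2\log\{n^nL(\theta_1^0)\}$, the same case split on whether the constrained maximizer hits the boundary, and the same appeal to Proposition~\ref{prop:1} to neutralize the $\hat\epsilon_i$ perturbation before invoking a CLT for $S_n$. The one substantive difference is how the unconstrained maximizer is located. The paper first proves an oracle version with known $\beta^*$ and handles the denominator via a Qin--Lawless expansion of the paired estimating equations $Q_{1n},Q_{2n}$ (its Lemma~\ref{lem:theta}), obtaining $\check\theta_1-\theta_1^*=-\{n^{-1}\sum_i Z_i(\theta_1^*)\}\big/\{n^{-1}\sum_i\partial Z_i(\theta_1^*)/\partial\theta_1\}+o_p(n^{-1/2})$, and only afterwards transfers to the plug-in statistic. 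You instead work directly with $\hat Z_i$ and observe algebraically that, because $\hat\theta_{(1)}$ coincides with the last $d-1$ coordinates of $\Xi^{-1}\hat\Upsilon$, the root of the linear map $\theta_1\mapsto\sum_i\hat Z_i(\theta_1)$ is exactly the first coordinate $\tilde\theta_1$, so $\sum_i\hat Z_i(\theta_1^0)=\Xi_{11}(\tilde\theta_1-\theta_1^0)$ without any Taylor step. This is more elementary and exploits the linear-in-$\theta_1$ structure of the score; the paper's route has the mild advantage of fitting the standard profile-EL template and making the remainder order explicit. Both arguments leave the claim ``$\max_{\theta_1\ge0}L(\theta_1)=L(0)$ when the unconstrained maximizer is negative'' at the same informal level.
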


Although the \ELR~statistic $\hat c_n(\theta_1^0)\left(-2\log \ELR(\theta_1^0)\right)$ in Theorem \ref{th:3} involves optimizations in the numerator and denominator,
the following lemma shows that the statistic has  an asymptotically equivalent expression that can be used to calculate the statistic efficiently.  

\begin{lemma}\label{lem:stat1}
    If $\theta^*_{(1)}\in R_+^{d-1}$, then under Conditions \ref{cond:1-2} and \ref{cond:1-1}, 
\begin{align*}
     & \hat c_n(\theta_1^0)\left\{-2\log \ELR(\theta_1^0) \right\} \\
    = & 
    \begin{cases}
    \frac{\left\{n^{-1/2}\sum_{i=1}^n\hat Z_i(\theta_1^0)\right\}^2}{\hat\nu_{1n}^2(\theta_1^0)} +o_p(1), & \text{ if }\quad \theta_1^0>0, \\
   \frac{\left\{n^{-1/2}\sum_{i=1}^n\hat Z_i(\theta_1^0)\right\}^2}{\hat\nu_{1n}^2(\theta_1^0)}I(\sum_{i=1}^n\hat Z_i(\theta_1^0)\geq 0)+o_p(1), & \text{ if }\quad \theta_1^0=0.
    \end{cases}
\end{align*}    
\end{lemma}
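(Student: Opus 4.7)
The plan is to analyze the numerator $L(\theta_1^0)$ and the denominator $\max_{\theta_1 \geq 0} L(\theta_1)$ separately. Two observations make this tractable: (i) the standard Lagrange-multiplier machinery of empirical likelihood yields a Wilks-type quadratic expansion of $-2\log\{n^n L(\theta_1^0)\}$; and (ii) $\hat Z_i(\theta_1) = \hat Z_i(0) - \theta_1\,\tr(\Phi_{i1}^2)$ is affine in $\theta_1$, so $\sum_i \hat Z_i(\theta_1) = A_n - \theta_1 B_n$ vanishes at a unique $\tilde\theta_1 := A_n/B_n$ with $B_n = \sum_i \tr(\Phi_{i1}^2) > 0$ and $A_n = \sum_i \hat Z_i(0)$; the sign of $A_n$ then determines whether the boundary $\theta_1 = 0$ is active at the unconstrained maximizer.

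First I would run the usual EL dual argument at $\theta_1 = \theta_1^0$. Writing $p_i = 1/[n(1+\lambda \hat Z_i(\theta_1^0))]$ with $\lambda$ solving $\sum_i \hat Z_i(\theta_1^0)/(1+\lambda \hat Z_i(\theta_1^0)) = 0$, I need to verify $\lambda = O_p(n^{-1/2})$ and $\max_i |\hat Z_i(\theta_1^0)| = o_p(n^{1/2})$; these follow from Proposition~\ref{prop:1} (which controls $\hat R_i - R_i$), the $(4+\gamma_1)$-moment Condition~\ref{cond:1-2}, and consistency of $\hat\theta_{(1)}$ from \eqref{eq:theta2}. A two-term Taylor expansion in $\lambda$ then gives
\[
-2\log\{n^n L(\theta_1^0)\} = \frac{\{n^{-1/2}\sum_i \hat Z_i(\theta_1^0)\}^2}{\hat\nu_{2n}^2(\theta_1^0)} + o_p(1),
\]
and multiplying by $\hat c_n(\theta_1^0) = \hat\nu_{2n}^2/\hat\nu_{1n}^2$ converts the denominator into the required $\hat\nu_{1n}^2(\theta_1^0)$.

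Next I pin down $\max_{\theta_1 \geq 0} L(\theta_1)$. Since $L(\theta_1) \leq 1/n^n$ with equality iff the uniform weights $p_i = 1/n$ satisfy the moment constraint, the unconstrained maximum of $L$ on $\mathbb{R}$ equals $1/n^n$ and is attained at $\tilde\theta_1$; concavity of $\log L(\theta_1)$ inherited from the dual problem gives unimodality. Hence $\max_{\theta_1 \geq 0} L(\theta_1) = 1/n^n$ when $\tilde\theta_1 \geq 0$ and $\max_{\theta_1 \geq 0} L(\theta_1) = L(0)$ when $\tilde\theta_1 < 0$. If $\theta_1^0 > 0$, then consistency $\tilde\theta_1 \to \theta_1^* = \theta_1^0 > 0$ gives $P(\tilde\theta_1 \geq 0) \to 1$, so the denominator equals $1/n^n$ with probability tending to one and the first case follows directly from the expansion above. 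If $\theta_1^0 = 0$, I split on the sign of $A_n = \sum_i \hat Z_i(0)$: on $\{A_n \geq 0\}$ the same expansion applies; on $\{A_n < 0\}$ the denominator equals $L(0)$, forcing $\ELR(0) = 1$ and $-2\log\ELR(0) = 0$. These two subcases combine exactly into the indicator-weighted expression in the second line.

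The hard part will be the boundary case: I need unimodality of $L(\cdot)$ (so the constrained optimum sits exactly at $\theta_1 = 0$ when $\tilde\theta_1 < 0$, rather than at some interior point) together with control of the Wilks remainder that is uniform enough to survive the truncation by the indicator. Both rely on Proposition~\ref{prop:1}, which supplies the $O(n^{-1})$ bias and $O(n^{-2})$ covariance bounds needed to treat $\hat Z_i$ as if it had been computed from the true $R_i$, so that the familiar EL expansion carries over despite the plug-in for $\beta^*$ and $\theta_{(1)}^*$.
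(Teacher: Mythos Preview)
Your proposal is correct and follows essentially the same route as the paper: the standard EL Lagrange expansion for the numerator (their equation for $-2\log W_1(\theta_1)$), identification of the unconstrained maximizer of $L$ as the point where $\lambda=0$ and $p_i=1/n$, and the boundary dichotomy based on the sign of $\sum_i\hat Z_i(0)$ (their Lemma on $W_1(\tilde\theta_1)$). The only cosmetic difference is that where you directly exploit the affine structure $\hat Z_i(\theta_1)=\hat Z_i(0)-\theta_1\|\Phi_{i1}\|_F^2$ to read off $\tilde\theta_1=A_n/B_n$ exactly, the paper reaches the same conclusion through the Qin--Lawless score-equation expansion before specializing via $\partial Z_i/\partial\theta_1=-\|\Phi_{i1}\|_F^2$; your version is slightly more direct, theirs is the generic template. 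Both of you invoke the monotonicity/unimodality of $L(\cdot)$ on either side of the unconstrained maximizer without a standalone proof, so your flagged ``hard part'' is at the same level of rigor as the paper's treatment.
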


We next provide an estimator of the asymptotic variance  of $n^{-1/2}\sum_{i=1}^n \hat Z_i(\theta_1^0)$.   We rewrite $\Xi$ as \[\Xi=\begin{psmallmatrix} E_{11} & E_{12} \\ E_{21} & E_{22}
\end{psmallmatrix}
\]
with $E_{11}$ being a scalar.
Let $F=E_{22}^{-1}E_{21}=(F_1,\cdots,F_{d-1})^T$ and $\alpha = 1 - E_{12}F/E_{11} \in (0,1]$. It can be verified that 
\begin{equation}\label{eq:zdm}
\sum_{i=1}^n\hat Z_i(\theta_1^0) = \sum_{i=1}^n\hat D_i(\theta_1^0) = \sum_{i=1}^n\hat M_i(\theta_1^0),
\end{equation}
where 
\begin{align*}
    \hat D_i(\theta_1^0)&=\alpha^{-1}\langle\Phi_{i1}-\sum_{q=1}^{d-1}F_q\Phi_{iq+1}, \hat R_i-\theta_1^0\Phi_{i1}\rangle, \\
    \hat M_i(\theta_1^0)&=\alpha^{-1}\langle\Phi_{i1}-\sum_{q=1}^{d-1}F_q\Phi_{iq+1}, \hat R_i-H_i((\theta_1^0,\hat\theta_{(1)}^T)^T)\rangle.
\end{align*}
In addition, for $i\neq j$,
\begin{align*}
    \cov({\hat R_{i}},{\hat R_{j}}) = & \cov({\delta_i}+{\hat\epsilon_i},{\delta_j}+{\hat\epsilon_j}) 
    =  \cov({\delta_i},{\hat\epsilon_j})+\cov({\hat\epsilon_i},{\delta_j})+\cov({\hat\epsilon_i},{\hat\epsilon_j})
    =  O(n^{-2})
\end{align*}
based on proposition \ref{prop:1}.
Therefore, ${\hat D_i(\theta_1^0)}~(i=1,\cdots,n)$ are asymptotically independent with expectation 
 $E(\hat D_i(\theta_1^0))=\alpha^{-1}\langle\Phi_{i1}-\sum_{q=1}^{d-1}F_q\Phi_{iq+1}, \sum_{q=2}^d\theta_q^*\Phi_{iq}\rangle+O(n^{-1})$, while
the expectations of $\hat Z_i(\theta_1^0)$ and $\hat M_i(\theta_1^0)$ are $O(n^{-1})$ (see \eqref{eq:hatZi} in the appendix). We have 
\begin{align*}\hat D_i(\theta_1^0)-E(\hat D_i(\theta_1^0))&=\alpha^{-1}\langle\Phi_{i1}-\sum_{q=1}^{d-1}F_q\Phi_{iq+1}, \hat R_i-H_i((\theta_1^0,(\theta_{(1)}^*)^T)^T)\rangle+O(n^{-1})\\\
  &=\hat M_i(\theta_1^0)+o_p(1).
  \end{align*}
Therefore,  
\begin{align}\label{eq:var}
    \var\big\{n^{-1/2}\sum_{i=1}^n\hat Z_i(\theta_1^0)\big\} = &  \frac{1}{n}\sum_{i=1}^n\{\hat D_i(\theta_1^0)-E(\hat D_i(\theta_1^0))\}^2+o_p(1) \notag \\
    =  & \frac{1}{n}\sum_{i=1}^n\hat M_i(\theta_1^0)^2 + o_p(1),
\end{align}
which leads a consistent estimator of the variance of $n^{-1/2}\sum_{i=1}^n\hat Z_i(\theta_1^0)$ as
\begin{equation*}
    \hat{\nu}_{1n}^2(\theta_1^0) = n^{-1}\sum_{i=1}^n\hat M_i(\theta_1^0)^2.
\end{equation*}

\section{Variance Component  Analysis Over a Sequence of Responses}\label{sec:global}

In some applications, we are interested in testing whether the variance components are all zero over a set of possibly correlated outcomes. One example of such applications is to test the variance components for the activity distribution based on wearable device data where we are interested in testing the variance component at each of the quantiles $t$ of the activity distribution.  Extending model \eqref{eq:yi}, we assume the following outcome model  at level $t$, 
\begin{equation}\label{eq:yit}
    y_i(t)=X_i\beta^*(t)+r_i(t), \quad i=1,\cdots,n,
\end{equation}
where $ r_i(t)\in R^{n_i}$ is a zero-mean random variable with variance $H_i(\theta^*(t))$. 
We assume that $H_i(\theta^*(t))$ has the same linear structure for each $t$, 
\[
H_i(\theta^*(t)) = \sum_{q=1}^d\theta^*_q(t)\Phi_{iq}, \quad \theta^*(t)=\{\theta^*_1(t),\cdots,\theta^*_d(t)\}^T= \{\theta^*_1(t),\theta^{*T}_{(1)}(t)\}^T.
\]

We are interested in testing the null $H_0: \theta^*_1(t)\equiv \theta_1^0, ~t\in[t_1,t_2]$, where $[t_1,t_2]$ is a pre-defined interval. We propose the following maximally selected empirical likelihood ratio statistic (g\ELR),
\begin{equation}\label{max.test}
\Gamma = \sup_{t\in[t_1,t_2]}\hat{c}_n(\theta_1^0,t)\left\{-2\log \ELR(\theta_1^0,t)\right\},
\end{equation}
where $\hat{c}_n(\theta_1^0,t)\left\{-2\log \ELR(\theta_1^0,t)\right\}$ is the \ELR statistic for the outcome at $t$. 
It can be shown that 
$
\Gamma = \sup_{t\in[t_1,t_2]}S(t)+o_p(1),
$
with 
\begin{equation*}
    S(t)=\begin{cases}
    \frac{\{n^{-1/2}\sum_{i=1}^{n} \hat Z_i(\theta_1^0,t)\}^2}{\hat{\nu}_{1n}^2(\theta_1^0,t)}, & \text{ if }\theta_1^0>0, \\
        \frac{\{n^{-1/2}\sum_{i=1}^{n} \hat Z_i(\theta_1^0,t)\}^2}{\hat{\nu}_{1n}^2(\theta_1^0,t)}
        I\{\sum_{i=1}^n \hat Z_i(\theta_1^0,t)\geq 0\}, & \text{ if }\theta_1^0=0, 
    \end{cases}
\end{equation*}
where 
\begin{align*}
\hat Z_i(\theta_1^0,t) &= \tr\Bigg\{\Phi_{i1}\Bigg(\hat R_i(t)-\Phi_{i1}\theta_1^0-\sum_{q=2}^d\hat\theta_q(t)\Phi_{iq}\Bigg)\Bigg\},\\
\hat{\nu}_{1n}^2(\theta_1^0,t) &= n^{-1}\alpha^{-2}\sum_{i=1}^n\Bigg\langle \hat R_i(t)-H_i((\theta_1^0,\hat{\theta}_{(1)}(t)^T)^T),
\Phi_{i1}-\sum_{q=1}^{d-1}F_q\Phi_{iq+1}\Bigg\rangle^2.
\end{align*}

Assessment of the statistical significance of the statistic  $\Gamma$ defined in \eqref{max.test} is challenging because of the dependence of $\hat Z_i(\theta_1^0,t)$.  We propose a simple way of evaluating its significance by perturbing the  \textsc{el}  statistic. Specifically, 
we apply \eqref{eq:zdm} to rewrite $\sum_{i=1}^n\hat Z_i(\theta_1^0,t)$ as $\sum_{i=1}^n\hat M_i(\theta_1^0,t)$, where $\hat M_i(\theta_1^0,t) = \alpha^{-1}\left\langle\Phi_{i1}-\sum_{q=1}^{d-1}F_q\Phi_{iq+1},\hat R_i(t)-H_i((\theta_1^0,\hat\theta_{(1)}^T(t))^T)\right\rangle$.
We can generate the null distribution of  $\Gamma$ by   perturbing  the test statistic $\Gamma^{(g)}$. Specifically, for each $g~(g=1,\cdots,G)$,  we generate  $\xi_i^{(g)}$  from  i.i.d.\ standard normal distribution and define 
\[
S^{(g)}(t)=
\begin{cases}
    
    \frac{\{n^{-1/2}\sum_{i=1}^{n}\hat M_i(\theta_1^0,t)\xi_i^{(g)}\}^2}{\hat\nu_{1n}^2(\theta_1^0,t)}, & \text{ if }\theta_1^0>0, \\
    \frac{\{n^{-1/2}\sum_{i=1}^{n}\hat M_i(\theta_1^0,t)\xi_i^{(g)}\}^2}{\hat\nu_{1n}^2(\theta_1^0,t)}I\{\sum_{i=1}^n\hat M_i(\theta_1^0,t)\xi_i^{(g)}\geq 0\}, & \text{ if }\theta_1^0=0.
\end{cases}
\]
Define the corresponding perturbed test statistic as $\Gamma^{(g)}=\sup_{t\in[t_1,t_2]}S^{(g)}(t)$.
The following Proportion \ref{prop:2} shows that the perturbed test statistics have the same distribution as the original test statistic under the null. 
Therefore, the $p$-value of $\Gamma$ can be approximated by $\sum_{g=1}^GI(\Gamma^{(g)}>\Gamma)/G$.

\begin{proposition}\label{prop:2}
    $\hat M_i(\theta_1^0,t)$ satisfies
    \begin{enumerate}
        \item[(i)] $E\{n^{-1/2}\sum_{i=1}^n\hat M_i(\theta_1^0,t)\xi_i^{(g)}\}-E\{n^{-1/2}\sum_{i=1}^n\hat Z_i(\theta_1^0,t)\}=o(1)$;
        \item[(ii)] $\var\{n^{-1/2}\sum_{i=1}^n\hat M_i(\theta_1^0,t)\xi_i^{(g)}\}-\var\{n^{-1/2}\sum_{i=1}^n\hat Z_i(\theta_1^0,t)\}=o(1)$;
        \item[(iii)] $\cov\{n^{-1/2}\sum_{i=1}^n\hat M_i(\theta_1^0,s)\xi_i^{(g)},n^{-1/2}\sum_{j=1}^n\hat M_j(\theta_1^0,t)\xi_j^{(g)}\}-\cov\{n^{-1/2}\sum_{i=1}^n\hat Z_i(\theta_1^0,s),$\\
         $n^{-1/2}\sum_{j=1}^n\hat Z_j(\theta_1^0,t)\}=o(1)$.
    \end{enumerate}
\end{proposition}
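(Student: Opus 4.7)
The plan exploits two features. First, the Gaussian multipliers $\xi_i^{(g)}$ are i.i.d.\ and independent of the data, so conditioning on the data uses $E[\xi_i^{(g)}]=0$ and $E[\xi_i^{(g)}\xi_j^{(g)}]=\delta_{ij}$ to collapse every cross term with distinct multipliers. Second, the identity $\sum_i \hat Z_i(\theta_1^0,t) = \sum_i \hat D_i(\theta_1^0,t)$ from \eqref{eq:zdm}, the approximation $\hat D_i - E(\hat D_i) = \hat M_i + o_p(1)$ established just before \eqref{eq:var}, and Proposition \ref{prop:1}'s bound $\cov(\hat R_i,\hat R_j) = O(n^{-2})$ for $i\neq j$ together reduce moments of the original sum to the same per-subject averages as those of the perturbed sum.

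I would carry out the three parts in order. For (i), independence and the zero mean of $\xi_i^{(g)}$ give $E\{n^{-1/2}\sum_i \hat M_i(\theta_1^0,t)\xi_i^{(g)}\} = 0$ exactly; under the null, equation \eqref{eq:hatZi} yields $E\{\hat Z_i(\theta_1^0,t)\} = O(n^{-1})$ uniformly in $i$, so the other term is $O(n^{-1/2}) = o(1)$. For (ii), conditioning on the data reduces the perturbed variance to $n^{-1}\sum_i E[\hat M_i^2(\theta_1^0,t)]$. Using $\sum \hat Z_i = \sum \hat D_i$, the original variance splits as $n^{-1}\sum_i \var(\hat D_i) + n^{-1}\sum_{i\neq j}\cov(\hat D_i,\hat D_j)$; each off-diagonal term is a bounded bilinear form in $\cov(\hat R_i,\hat R_j)$, hence $O(n^{-2})$ by Proposition \ref{prop:1}, so that sum is $O(n^{-1})=o(1)$, and each diagonal $\var(\hat D_i)=E[\hat M_i^2]+o(1)$ by the $L^2$-level equivalence of $\hat D_i-E(\hat D_i)$ and $\hat M_i$. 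For (iii), the same argument with two time points $s,t$ collapses the perturbed covariance to $n^{-1}\sum_i E[\hat M_i(s)\hat M_i(t)]$, while the original covariance decomposes into the same diagonal sum (to leading order) plus an $O(n^{-1})$ off-diagonal contribution via $\cov(\hat R_i(s),\hat R_j(t))=O(n^{-2})$ for $i\neq j$, which follows by applying Proposition \ref{prop:1}'s argument to $(r_i(s),r_j(t))$.

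The main obstacle I foresee is upgrading the in-probability approximation $\hat D_i-E(\hat D_i)=\hat M_i+o_p(1)$ to an $L^2$ statement, because parts (ii) and (iii) require second-moment matching. The leftover term $\alpha^{-1}\langle\Phi_{i1}-\sum_{q=1}^{d-1}F_q\Phi_{iq+1},\sum_{q=2}^d(\hat\theta_q-\theta_q^*)\Phi_{iq}\rangle$ has $L^2$ norm $O(n^{-1/2})$ only under uniform second-moment control of $\hat\theta_{(1)}-\theta_{(1)}^*$; since $\hat\theta_{(1)}$ is quadratic in the data through $\hat R_i$, Condition \ref{cond:1-2} (bounded $4+\gamma_1$ moments of $r_i$) is the binding assumption, and the verification that it suffices---by Cauchy--Schwarz and the rates in Proposition \ref{prop:1}---is the one nontrivial moment calculation in the proof.
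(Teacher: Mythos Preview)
Your proposal is correct and follows essentially the same route as the paper: both compute the perturbed moments by using $E[\xi_i^{(g)}\xi_j^{(g)}]=\delta_{ij}$ to collapse to $n^{-1}\sum_i E[\hat M_i^2]$ (resp.\ $E[\hat M_i(s)\hat M_i(t)]$), and both handle the original moments via the identity $\sum\hat Z_i=\sum\hat D_i$, Proposition~\ref{prop:1} to kill the off-diagonal covariances, and the relation $\hat D_i-E(\hat D_i)=\hat M_i+o_p(1)$ for the diagonal. The one place you are more careful than the paper is in flagging that the last step needs an $L^2$ (not merely $o_p(1)$) upgrade via Condition~\ref{cond:1-2}; the paper simply asserts $n^{-1}\sum_i\var(\hat D_i)=n^{-1}\sum_iE[\hat M_i^2]+o(1)$ without isolating this point.
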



\ignore{
\subsection{Identification of regions with nonzero  components}\label{sec:region}

For a sequence of outcomes $y(t)$ indexed by  $t$, it is sometimes interesting to identify the intervals along $t$ such that some variance components in this interval are nonzero.  Since we do not know the lengths and locations of such intervals, we propose to scan the $t$ values using a set of intervals of different lengths $k$. 
Let $\cJ_k$ be the set of candidate intervals under the scanning length $k$  and let $\cJ=\cup_{k}\cJ_k$ be the set of all candidate intervals.
For each candidate interval $L\in\cJ$, we test the null hypothesis $H_0:\theta_1^*(t)\equiv 0,~t\in L$.
Let the observed statistic be $\Gamma_L$ and its permutations be $\Gamma_L^{(g)},~g=1,\cdots,G$. To deal with the multiple testing, we use the threshold $\sqrt{2\log|\cJ|}$ in \cite{jeng2010optimal}, where $|\cJ|$ is the number of elements in $\cJ$. The signal in the interval $L$ is significant if 
\begin{equation}\label{eq:h}
    h(\Gamma_L) = \frac{\Gamma_L-\bar\Gamma_L}{\sqrt{\sum_{g=1}^G(\Gamma_L^{(g)}-\bar\Gamma_L)^2/(G-1)}}> \sqrt{2\log|\cJ|}= h_0,
\end{equation}
where $\bar\Gamma_L=(\sum_{g=1}^G\Gamma_L^{(g)})/G$.
We use a similar scanning  procedure as in \cite{jeng2010optimal} to select intervals from the candidate set. \\
\indent Step 1. Let $j=1$. Define $\tilde{\cJ}^{(j)}=\{L\in\cJ:h(\Gamma_L) >h_0\}$. \\
\indent Step 2. Let $\hat{I}^{(j)} = \cup\{\arg\max_{L\in\tilde\cJ^{(j)}}h(\Gamma_L)\}$. \\
\indent Step 3. Update $\tilde{\cJ}^{(j+1)} = \tilde{\cJ}^{(j)}\setminus \{L\in\tilde{\cJ}^{(j)}:L\cap\hat I^{(j)}\neq \emptyset\}$. \\
\indent Step 4. Repeat Steps 2--4 with $j=j+1$ until $\tilde{\cJ}^{(j)}$ is empty. \\
We then define the collection of selected intervals as $\hat{\mathbb{I}}=\{\hat I^{(1)},\hat I^{(2)},\cdots\}$.
}

\section{Simulation studies}\label{sec:simu}
\subsection{Data generation}
We examine the performance of the proposed empirical likelihood ratio tests for variance components and compare the results with  the standard likelihood ratio (\textsc{lr}) test assuming  Gaussian random effects and Gaussian errors. 
To mimic the twin design in the heritability analysis of wearable device data that we  analyze next,  we simulate data on a monozygotic or dizygotic  twin pair. For the $i$th twin, let $n_i=n_{i1}+n_{i2}$, where $n_{i1}$ and $n_{i2}$ are the numbers of repeated measures for the twin. In wearable device data, $y_i(t)$ represents the $t$th quantile of the activity distributions over $n_i$ days. 
The data are generated from the model:
\begin{equation}\label{eq:herit}
    y_i(t) = X_i\beta(t) + T_ia_i(t)+\tau_i(t),\quad i=1,\cdots,n, \quad t=s_1,s_2,\cdots, s_m,
\end{equation}
where $T_i=\text{blkdiag}\{\bone_{n_{i1}},\bone_{n_{i2}}\}$, $a_i(t)$ is a random intercept, and $\tau_i(t)$ denotes zero-mean noise with variance $\sigma_M^2(t)\bI_{n_i}$. Here, $a_i(t)$ is assumed as the sum of additive genetic effect $g_i(t)$, common environment $c_i(t)$, and unique subject-specific environment $e_i(t)$, i.e.,
\[
a_i(t) = g_i(t) + c_i(t) + e_i(t),
\]
where $g_i(t),c_i(t),e_i(t)$ are independent zero-mean random variables with variance-covariance  $\sigma_A^2(t)K_i$, $\sigma_C^2(t)\Lambda_i$, and  $\sigma_E^2(t)\bI_2$, respectively. The variance components $\sigma_A^2(t),\sigma_C^2(t)$, and $\sigma_E^2(t)$ represent the additive genetic variance, common environmental variance, and unique environmental variance, respectively. 
For the $i$th twin, $K_i$ is a genetic similarity matrix with $$K_i=\begin{pmatrix} 1 & 1 \\ 1 & 1 \end{pmatrix} \mbox{ for monozygotic  twin}, \mbox{    } K_i=\begin{pmatrix} 1 & 0.5 \\ 0.5 & 1 \end{pmatrix} \mbox{ for dizygotic  twin}, 
$$ and 
 $\Lambda_i$ quantifies shared environment between the twin pair with $$\Lambda_i=\begin{pmatrix} 1 & 1 \\ 1 & 1 \end{pmatrix}.$$ Under this model, we have 
\begin{align*}
H_i(\theta^*(t)) &= \sigma_A^2(t)T_iK_iT_i^T+\sigma_C^2(t)T_i\Lambda_iT_i^T+\sigma_E^2(t)T_iT_i^T+\sigma_M^2(t)\bI_{n_i},\\
~\theta^*(t)&=(\sigma_A^2(t),\sigma_C^2(t),\sigma_E^2(t),\sigma_M^2(t))^T.
\end{align*}

We sample $n_{ik}~(i=1,\cdots,n; k=1,2)$ from $\{3,4,5,6,7\}$ with equal probability $1/5$. 
We set $n=100$, among which there are 50 monozygotic twin families and 50 dizygotic twin families, and $t=0.01,0.03,0.05,\cdots,0.99$.
Denote by $x_{ij}^T$ the $j$th row of $X_i$.
Let $x_{ij}=(1,x_{ij1},x_{ij2})^T$ with $x_{ij1}\sim N(0,1)$ and $x_{ij2}\sim N(2,1)$. Moreover, we set 
$\beta(t)=(\beta_1(t),\beta_2(t),\beta_3(t))^T$ where $\beta_1(t)$ and $\beta_3(t)$ are the quantile functions of $N(1,6)$ and $N(1,9)$, respectively, and $\beta_2(t)=0$. Let
\[
\sigma_A^2(t) = \sum_{l=1}^{N_a}\lambda_l^a(t)(\psi_l^a(t))^2,\quad \sigma_C^2(t)=0,\quad\sigma_E^2(t) = \sum_{l=1}^{N_e}\lambda_l^e(t)(\psi_l^e(t))^2, 
\]
where $N_a=N_e=2$, $(\lambda_1^a(t),\lambda_2^a(t))=C_a(t)(0.5,1)$, $(\lambda_1^e(t),\lambda_2^e(t))=C_e(t)(0.6,0.9)$, $\psi_1^a(t)=\psi_2^e(t)=\sqrt{2}\sin(2\pi t)$, 
$\psi_2^a(t)=\psi_1^e(t)=\sqrt{2}\cos(2\pi t)$. 

\subsection{\ELR~test for single variance component}
For each $t\in\{0.01,0.03,0.05,\cdots,0.99\}$, we test the null hypothesis $H_0:\sigma_A^2(t)=0$.
We consider the following two types of distributions for $g_i(t),c_i(t),e_i(t)$ and $\tau_i(t)$:
\begin{enumerate}
    \item [(i)] multivariate normal distribution, $g_i(t)\stackrel{iid}{\sim}\cN(\bzero,\sigma_A^2(t)K_i)$, $c_i(t)\stackrel{iid}{\sim}\cN(\bzero,\sigma_C^2(t)\Lambda_i)$, $e_i(t)\stackrel{iid}{\sim}\cN(\bzero,\sigma_E^2(t)\bI_i)$, $\tau_{i}(t)\stackrel{iid}{\sim} \cN(\bzero,0.3\bI_{n_i})$;
    \item [(ii)] multivariate $t$ distribution, $g_i(t)\stackrel{iid}{\sim}t_3(\bzero,\sigma_A^2(t)K_i/3)$, $c_i(t)\stackrel{iid}{\sim}t_3(\bzero,\sigma_C^2(t)\Lambda_i/3)$, $e_i(t)\stackrel{iid}{\sim}$\\$t_3(\bzero,\sigma_E^2(t)\bI_i/3)$, $\tau_{i}(t)\stackrel{iid}{\sim} t_3(\bzero,0.1\bI_{n_i})$.
\end{enumerate}

\begin{figure}[!h]
    \centering
    \begin{tabular}{c}
        \includegraphics[width=1.0\textwidth]{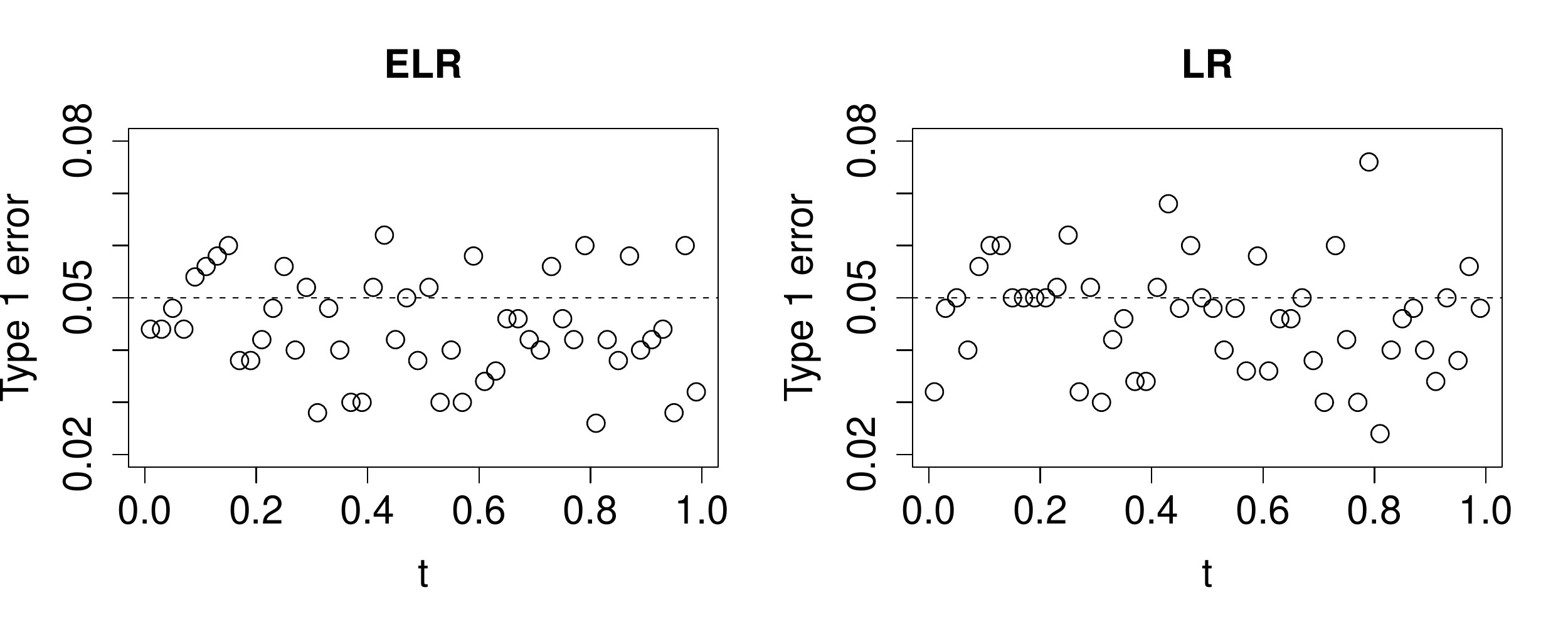}\\
        (a) normal random effects and errors. 
        \\
        \includegraphics[width=1.0\textwidth]{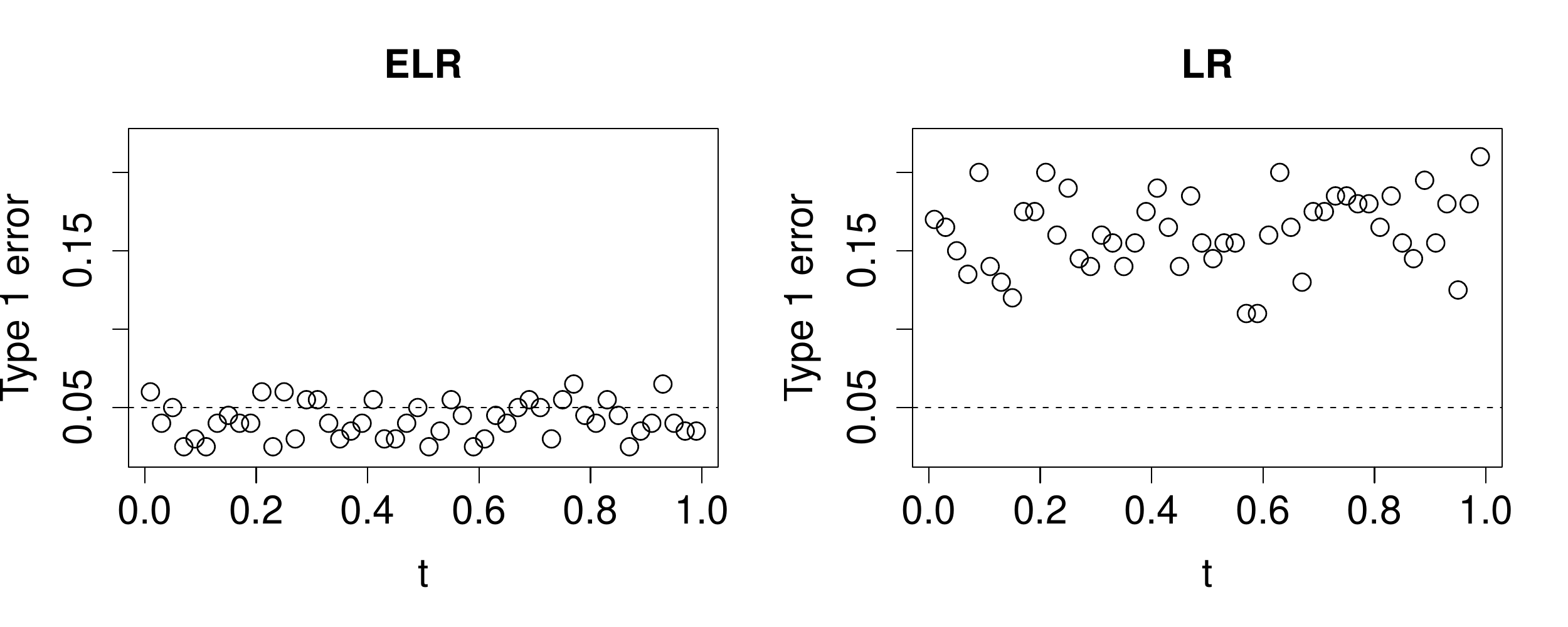}\\
        (b) $t$-distributed random effects and errors.  
    \end{tabular}
    \caption{Type 1 error for each given value of $t$.  Top: normal random effects and errors;  bottom: $t$-distributed random effects and errors. 
        $\ELR$:  \textsc{el}  ratio test with the lest-square estimate of $\beta^*$;  \textsc{lr} : likelihood ratio test under the normal random-effects assumptions. }\label{fig:herit1}
\end{figure}

Denote by \ELR~the proposed empirical likelihood ratio test with unknown $\beta^*(t)$.
We first consider the type 1 error for each given value of $t$. 
We consider the setting $C_a(t)=0$ and $C_e(t) = 0.1$  with random errors generated both from a normal and a $t$ distribution. We  repeat the simulations 500 times.  Figure \ref{fig:herit1} gives the results of type 1 errors for different values of $t$.  We see all the methods perform well under the normal errors. However,  \textsc{lr}  shows inflated type 1 errors when the error follows a long-tailed $t$ distribution.

To evaluate the power of the proposed tests, we consider the model with  $C_a(t)=0.1$ and $C_e(t) = 0.1$.
We calculate the empirical power of the proposed test at 0.05 level for different values of $t$ and present the results in Figure \ref{fig:herit2}. The proposed method exhibits similar power as the  \textsc{lr}  test under the normal error. 
When the error follows a $t$ distribution, we do not report  the result of the  \textsc{lr}  test because of its inflated type 1 error as shown in Figure \ref{fig:herit1} (b), and the \ELR~test does not lose much power. 

\begin{figure}[!h]
    \begin{center}
\includegraphics[width=1\textwidth]{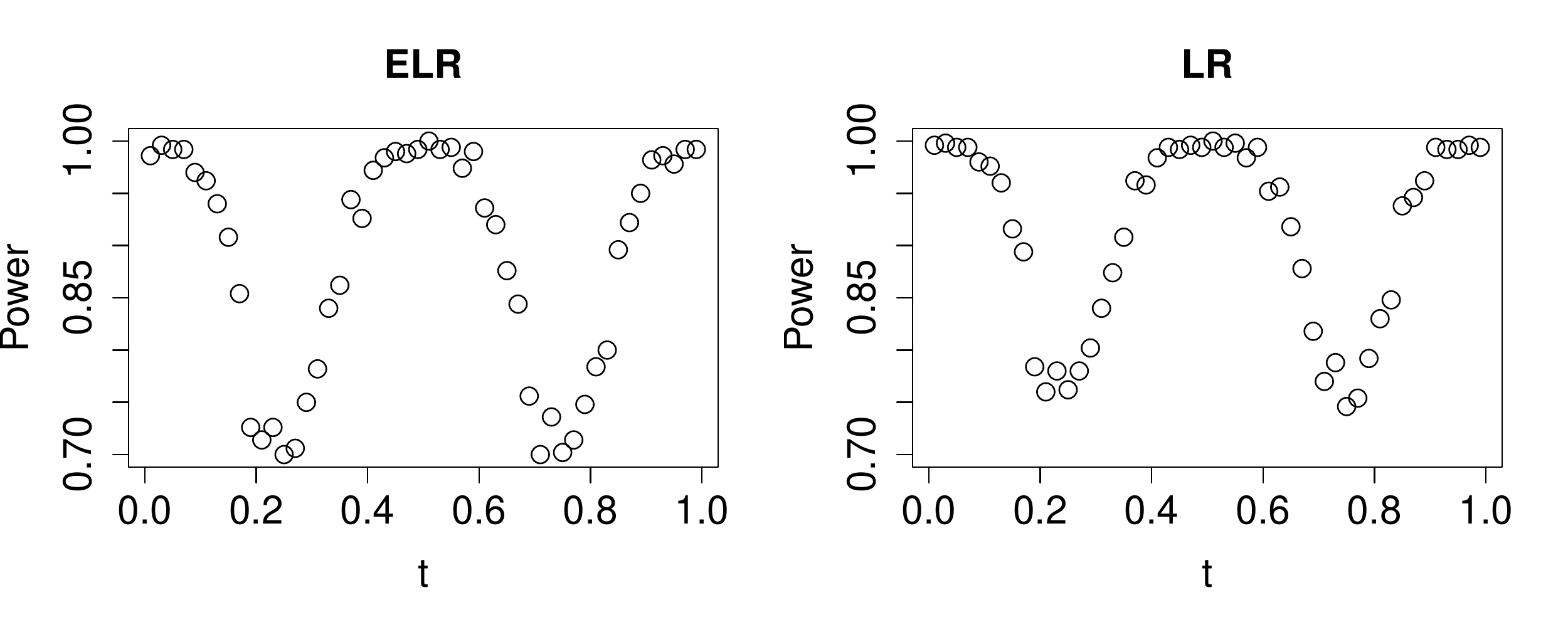}\\
        (a) normal random effects and errors. 
        \\
        \includegraphics[width=0.5\textwidth]{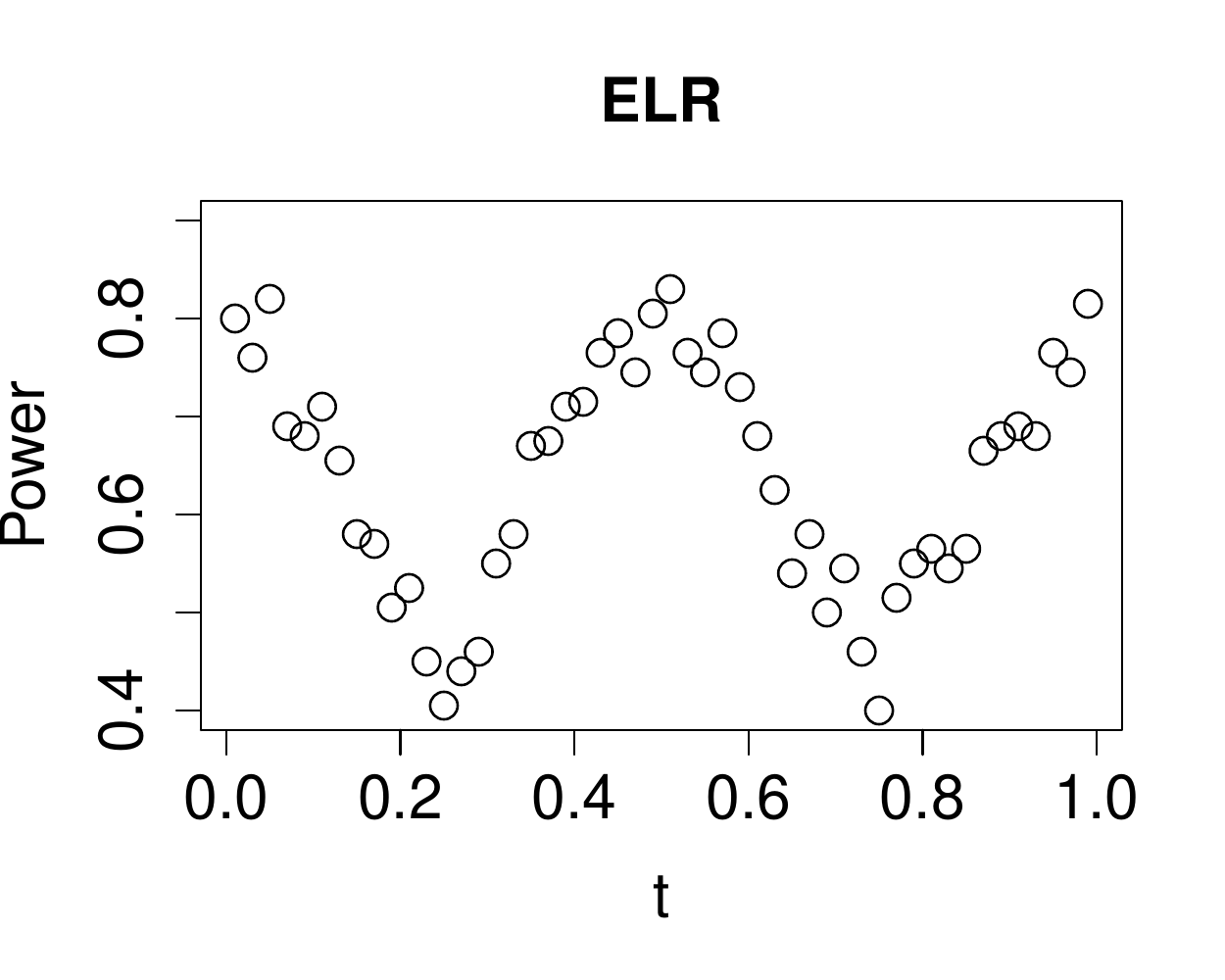}\\
        (b) $t$-distributed random effects and errors. 
\end{center}
    \caption{Empirical power for each given value of $t$. Top: normal random effects and errors;  bottom: $t$-distributed random effects and errors.    
    $\ELR$:  \textsc{el}  ratio test with the least-square estimate of  $\beta^*$;  \textsc{lr} : likelihood ratio test under the normal random-effects assumptions.}\label{fig:herit2}
\end{figure}

\subsection{\ELR~test for variance components over an interval}
To evaluate the proposed tests for variance components in the case of correlated outcomes over an interval of $t$, we generate data as follows.
Let $g_i(t)=\sigma_A(t)\zeta_{ai},c_i(t)=\sigma_C(t)\zeta_{ci}$, and $e_i(t)=\sigma_E(t)\zeta_{ei}$.
Let $\tau_i(t)=\sigma_M(t)\zeta_{\tau i}$, where
$
\sigma_M^2(t) = \sum_{l=1}^{N_m}\lambda_l^m(t)(\psi_l^m(t))^2
$
with $N_m=2$, $(\lambda_1^m(t),\lambda_2^m(t))=C_m(t)(0.5,1)$, $\psi_1^m(t)=\sqrt{2}\cos(2\pi t)$, and $\psi_2^m(t)=\sqrt{2}\sin(2\pi t)$. We consider two types of distributions for $\zeta_{ai},\zeta_{ci},\zeta_{ei}, \zeta_{\tau i}$: 
\begin{enumerate}
    \item [(i)] multivariate normal distribution, $\zeta_{ai}\stackrel{iid}{\sim}\cN(\bzero,K_i)$, $\zeta_{ci}\stackrel{iid}{\sim}\cN(\bzero,\Lambda_i)$, $\zeta_{ei}\stackrel{iid}{\sim}\cN(\bzero,\bI_2)$, $\zeta_{\tau i}\stackrel{iid}{\sim} \cN(\bzero,\bI_2)$;
    \item [(ii)] multivariate $t$ distribution, $\zeta_{ai}\stackrel{iid}{\sim}t_3(\bzero,K_i/3)$, $\zeta_{ci}\stackrel{iid}{\sim}t_3(\bzero,\Lambda_i/3)$, $\zeta_{ei}\stackrel{iid}{\sim}t_3(\bzero,\bI_2/3)$, $\zeta_{\tau i}\stackrel{iid}{\sim} t_3(\bzero,\bI_2/3)$.
\end{enumerate}

\begin{figure}[!h]
    \centering
    \includegraphics[width=0.85\textwidth]{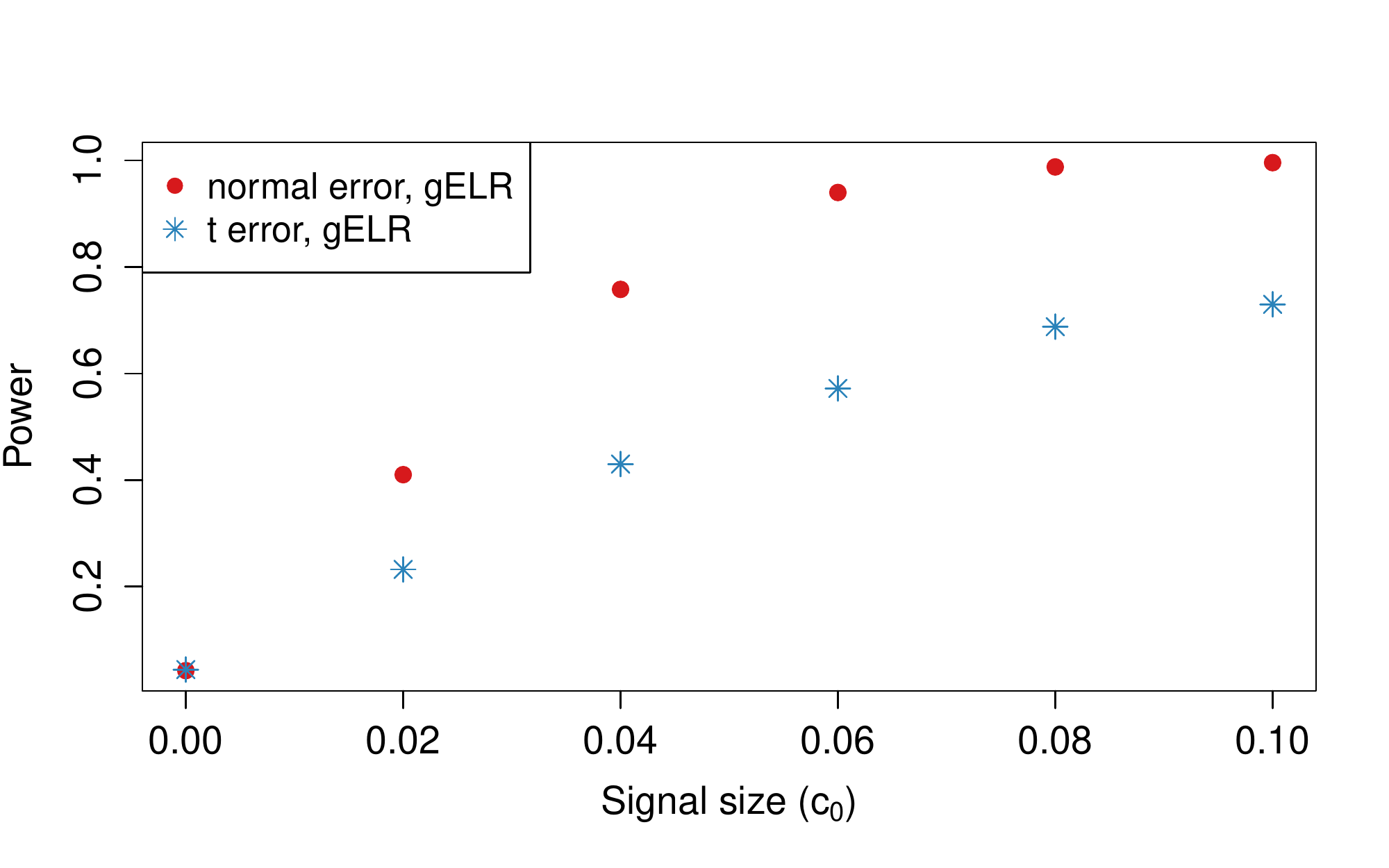}
    \caption{Empirical power for the global test $H_0:\sigma_A^2(t)\equiv 0,~t\in[0,1]$ at different values of $c_0$.}\label{fig:simuglobal}
\end{figure}

Denote by gELR the proposed global empirical likelihood ratio test with unknown $\beta^*(t)$.
We first consider the global test $H_0:\sigma_A^2(t)\equiv 0,~t\in[0,1]$. Let $C_a(t)=c_0I(t=0.49)$, $C_e(t)=0.1$, and $C_m(t)=0.08$, where $I(\cdot)$ is an indicator function. We consider different choices of the signal size $c_0$ by setting $c_0=0, 0.02, 0.04, 0.06, 0.08, 0.1$, and generate 500 datasets for each setting. 
Figure \ref{fig:simuglobal} presents the empirical power of gELR at 0.05 significance level under different distributions of random errors and different $c_0$. As expected, the empirical power of rejecting the null hypothesis increases with the signal size. 
Compared to the results under the multivariate $t$ distribution, the proposed test gELR has higher power when data are normally distributed.

To further evaluate the type 1 error and the power, we consider models with 
 $C_a(t)=0.08I(t\in\{0.47,0.49,0.51,0.53\})$, $C_e(t)=0.1$, and $C_m(t)=0.08$. We consider to test each of the 
candidate intervals of lengths  $\{3,4,5,6\}$  and denote them by scan3, scan4, scan5, and scan6, respectively. 
Let $\cJ_k$ be the set of candidate intervals under the scanning length $k$ ($k=3,4,5,6$) and let $\cJ=\cup_{k=3}^6\cJ_k$ be the set of all candidate intervals.
For each candidate interval $L\in\cJ$, we test the null hypothesis $H_0:\sigma_A^2(t)\equiv 0,~t\in L$. 
The signal in the interval $L$ is significant if 
\begin{equation*}\label{eq:h}
    h(\Gamma_L) = \frac{\Gamma_L-\bar\Gamma_L}{\sqrt{\sum_{g=1}^G(\Gamma_L^{(g)}-\bar\Gamma_L)^2/(G-1)}}> \sqrt{2\log|\cJ|},
\end{equation*}
where $\bar\Gamma_L=(\sum_{g=1}^G\Gamma_L^{(g)})/G$ with $G=1000$. The threshold $\sqrt{2\log|\cJ|}$ is selected based on the extreme value distribution of $|\cJ|$ normal random variables.

Under each type of error distributions, 500 datasets are generated.
For the global test under the candidate interval $L=\{t_1,t_1+0.02,\cdots,t_2\}$, we mark its empirical power at $(t_1+t_2)/2$.
The results are shown in Figure \ref{fig:herit3}.
The proposed global test gELR exhibits high power if the interval involves at least one nonzero time points and show almost no power otherwise.

\begin{figure}[!h]
    \begin{center}
    \begin{tabular}{c}
        \includegraphics[width=.5\textwidth]{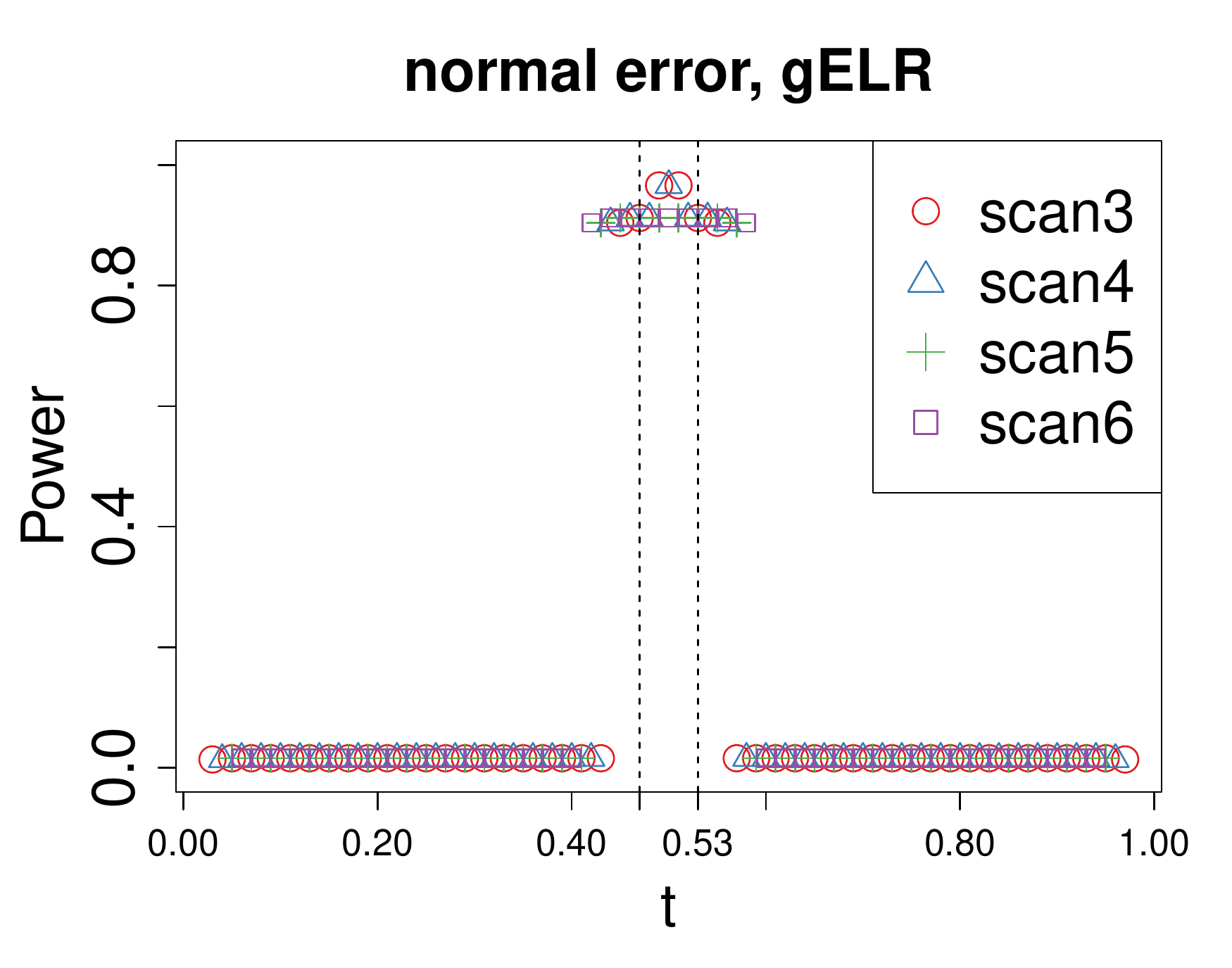}
        \includegraphics[width=.5\textwidth]{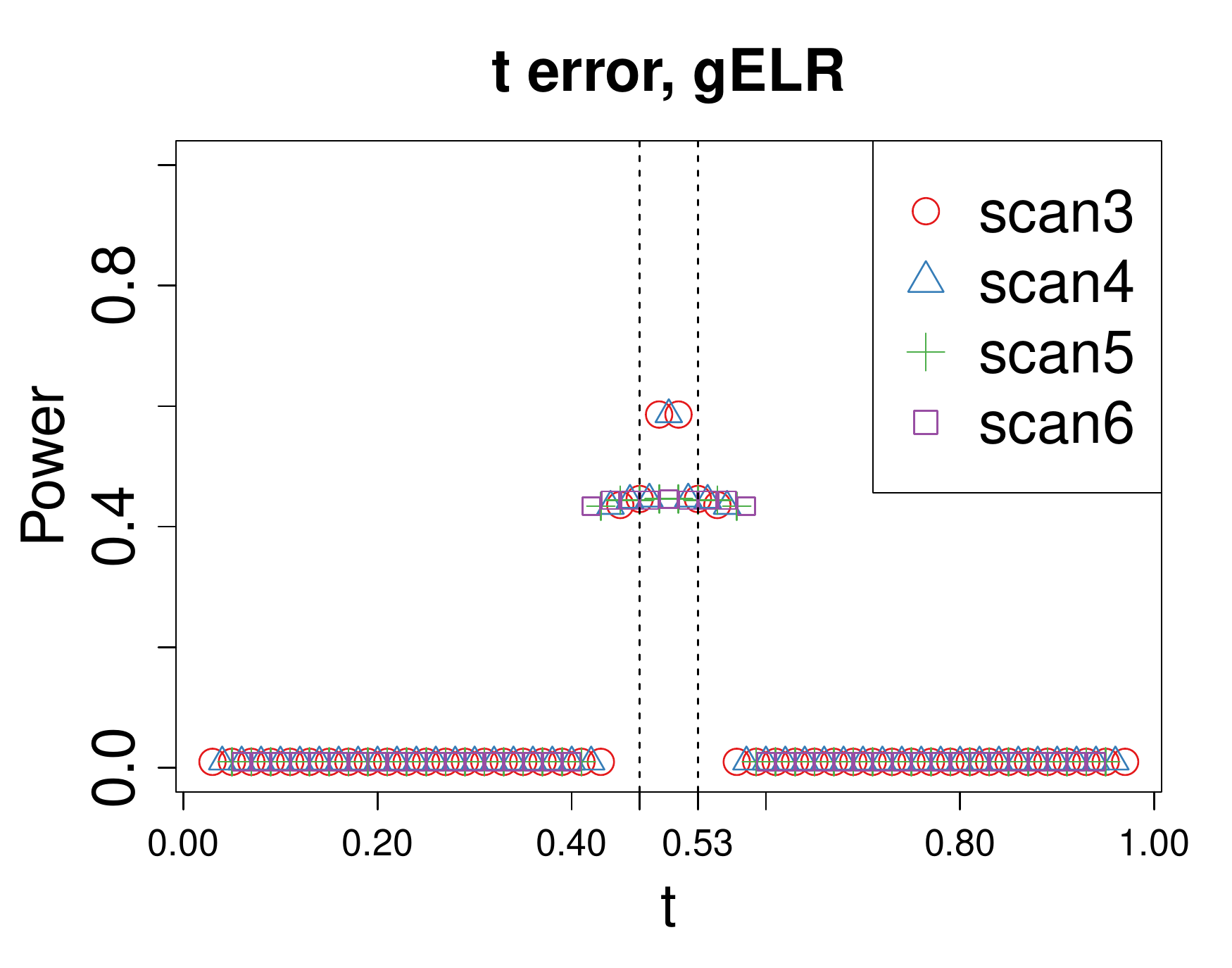}
    \end{tabular}
\end{center}
    \caption{Empirical  power  of testing zero variance component in a given interval of length of 3,4,5 and 6, where the true non-zero variance compnents are at \{0.47, 0.49, 0.51, 0.53\}. Left: normal random effect and error; right: $t$-radnom effect and error.}\label{fig:herit3}
\end{figure}

\ignore{
Next, we evaluate the proposed method in Section \ref{sec:region} for identifying regions with nonzero variance components.
Let $C_a(t)=0.08I(t\in\{0.47,0.49,0.51,0.53\})$, $C_e(t)=0.1$, and $C_m(t)=0.08$.
We set the scanning lengths to be $\{3,4,5,6\}$ time points and denote them by scan3, scan4, scan5, and scan6, respectively. 
Let $\cJ_k$ be the set of candidate intervals under the scanning length $k$ ($k=3,4,5,6$) and let $\cJ=\cup_{k=3}^7\cJ_k$ be the set of all candidate intervals.
For each candidate interval $L\in\cJ$, we test the null hypothesis $H_0:\sigma_A^2(t)\equiv 0,~t\in L$. The null hypothesis is rejected if $h(\Gamma_L)>\sqrt{2\log|\cJ|}$ where $h(\Gamma_L)$ is defined in \eqref{eq:h} with the number of permutation $G=1000$. 

Under each type of error distribution, 500 datasets are generated.
For the global test under the candidate interval $L=\{t_1,t_1+0.02,\cdots,t_2\}$, we mark its empirical power at $(t_1+t_2)/2$.
The results are shown in Figure \ref{fig:herit3}.
The proposed global tests gELR1 and Geller exhibit high power if the interval involves at least one nonzero time points and show almost no power otherwise.

\begin{table}[!h]
    \centering
    \caption{Medians and standard errors (in parentheses) of $D$ and $BP$ for the scanning procedure over 500 runs.}\label{tab:scan}
    \vspace{0.2cm}
    \begin{tabular}{cc|cc}\hline
        & \multicolumn{2}{c|}{normal error} & \multicolumn{2}{c}{$t$ error}\\
        & Geller & Geller & Geller & Geller \\
        \hline
        $D$ & 0.000 (0.008) & 0.134 (0.007) & 0.134 (0.021) & 0.134 (0.019) \\
        $BP$ & 0.000 (0.035) & 1.000 (0.029) & 1.000 (0.081) & 1.000 (0.069) \\
        \hline
    \end{tabular}
\end{table}

Denote by $\mathbb{I}=\{0.47,0.49,0.51,0.53\}$ the true signal interval. To evaluate the performance of the scanning procedure, we use the dissimilarities $D(\hat{\mathbb{I}},\mathbb{I})$ and $BP(\hat{\mathbb{I}},\mathbb{I})$ as in \cite{jeng2010optimal} to measure the accuracy of $\hat{\mathbb{I}}$:
\[
D(\hat{\mathbb{I}},\mathbb{I}) = \min_{\hat I\in\hat{\mathbb{I}}} 1-|\hat{I}\cap\mathbb{I}|/\sqrt{|\hat{I}||\mathbb{I}|},
\]
\[
BP(\hat{\mathbb{I}},\mathbb{I}) = \min_{\hat I\in\hat{\mathbb{I}}} |\hat{I}\setminus\hat{I}\cap\mathbb{I}|+|\mathbb{I}\setminus\hat{I}\cap\mathbb{I}|,
\]
where $|\cdot|$ represents the carnality of a set.
Table \ref{tab:scan} shows that both Geller and Geller identify the true signal interval very well and the results are robust with respect to the type of error distribution.
}

\section{Application to genetic heritability analysis of physical activity distribution}\label{sec:realapp}

\subsection{Description of the data}
We apply the methods to data set of the  Australian Twin study, which includes 366 healthy twins, 151 of them are monozygotic  twins, and 215 are dizygotic  twins. The participants wore actigraphy to track their physical activities for no more than 14 days. The minute-to-minute activity counts derived from actigraphy were collected in a 1440-dimensional vector per day. Since we are interested in inference of the heritability of the activity distributions, we obtain the empirical quantiles of activity counts at different quantiles, $t=1/144,~2/144,\cdots,144/144$. Specifically,  for the $j$th measurement (day) from the $k$th person in the $i$th twin family, the raw data $\xi_{ikj}=(\xi_{ikj1},\cdots,\xi_{ikj1440})^T$ from the wearable device are transformed by using
\[
\tilde\xi_{ikj} = \log(9250\xi_{ikj}+1), \quad i=1,\cdots,n;~ k=1,2;~ j=1,\cdots,n_{ik}.
\]
For the $k$th person in the $i$th twin family, the $j$th repeated measure of $t$-quantile of activity counts is  obtained as
\[
y_{ikj}(t) = \tilde\xi_{ikj}^{[1440\cdot t]},\quad t=1/144,~2/144,\cdots,144/144,
\]
where $\tilde\xi_{ikj}^{[s]}$ denotes the $s$th order statistic of $\tilde\xi_{ikj}$. 

The covariate $x_{ikj}$ includes gender, age, BMI, and indicator of weekend, i.e., $x_{ikj}=(1,\text{Gender},\text{Age},$ $\text{BMI},\text{Weekend})^T$. 
Let $y_i(t)=(y_{i11}(t),\cdots,y_{i1n_{i1}}(t),y_{i21}(t),\cdots,y_{i2n_{i2}}(t))^T$ and $X_i=(x_{i11}^T,\cdots,x_{i1n_{i1}}^T,$ $x_{i21}^T,\cdots,x_{i2n_{i2}}^T)^T$.
Let $y(t)=(y_1^T(t),\cdots,y_n^T(t))^T$. 
For each $t$, we remove the outliers of $y(t)$ defined as values that are more than 1.5 times the inter-quantile range above the upper quantile or below the lower quantile.
After removing all outliers and removing missing data, we have  $n=149$ twin families including 63 monozygotic twin families and 86 dizygotic twin families, and the total number of observations is 3,489.

\subsection{Effects of Gender, Age, BMI, Weekend on Activity Profiles}
We first examine the associations between  the covariates including gender, age, BMI, and weekend vs weekday and the overall activity distribution. 
For each of the four covariates and each of the $t$ values, we obtain the \textsc{el} estimator by solving the estimating equations $\sum_{i=1}^n\phi_i(\beta) = 0$, and apply Theorem \ref{th:beta1} to construct the confidence interval $\{\beta_0:-2\log\ELR_0(\beta_0)\leq \chi_1^2(1-\alpha)\}$, where $\chi_1^2(1-\alpha)$ is the $(1-\alpha)$ quantile of the $\chi_1^2$ distribution.
The first column of Figure \ref{fig:realbeta} shows the estimated regression coefficient for each of the $t$ values and its point-wise 95\% confidence intervals using the \textsc{el} method.  

\begin{figure}[!ht]
    \centering
        \includegraphics[width=1.0\textwidth]{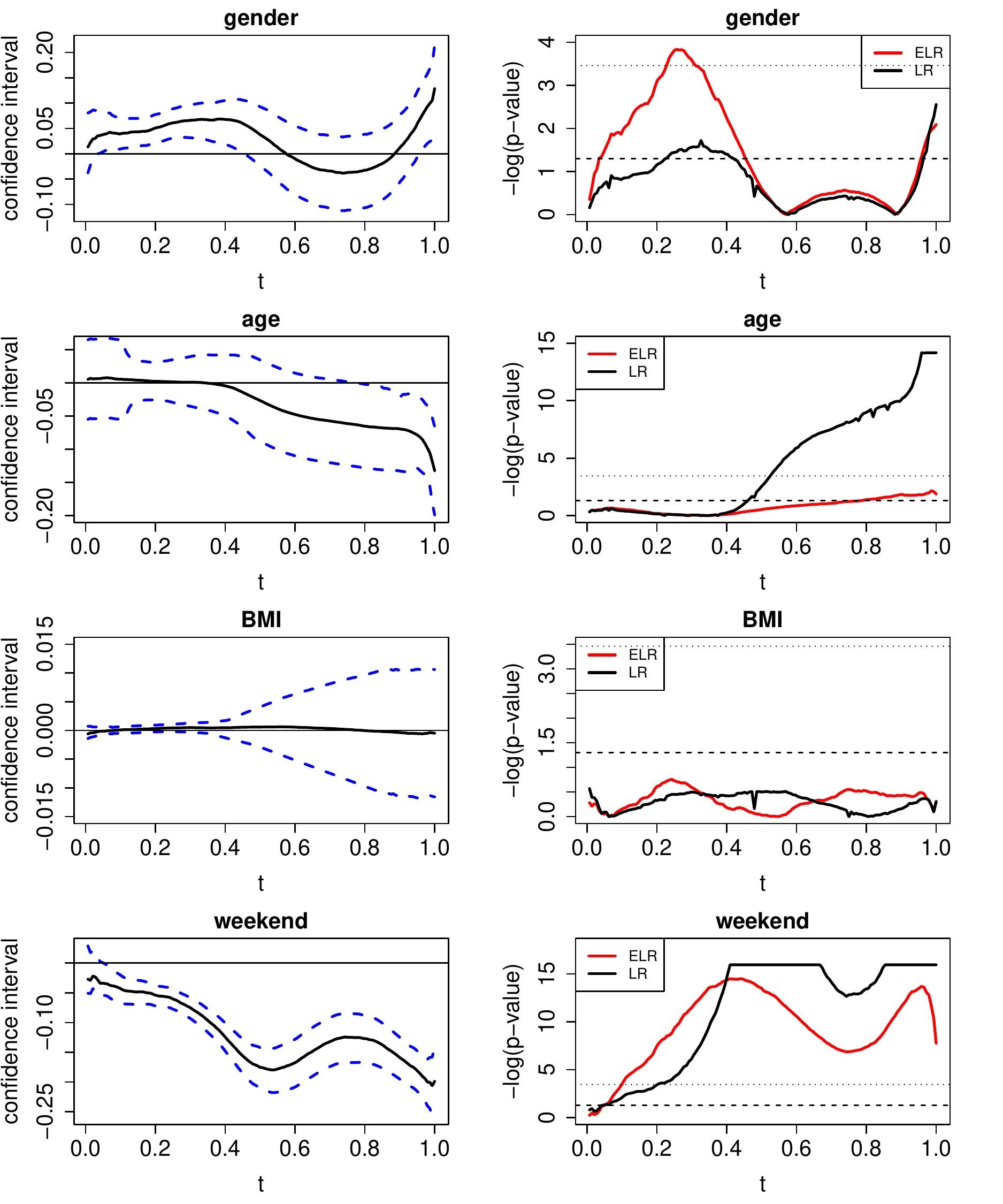}
    \caption{Estimate of $\beta^*(t)$, its 95\% confidence interval (left panel) and the $-\log_{10}(p$-value) (right panel)  for gender, age, BMI, and weekend for each of  quantile $t$ values. 
    The black dashed horizontal line is  the nominal  threshold $-\log_{10}(0.05)$, and the black dotted horizontal line is the Bonferroni corrected threshold $-\log_{10}(0.05/144)$.}\label{fig:realbeta}
\end{figure}

We then test whether there is any difference in activity profiles between individuals of different gender, age, and BMI and whether the activity profiles are different between weekdays and weekends. 
Specifically, we consider testing such differences at each of the quantile $t$.  To test $H_0:\beta_l(t)=0,~l\in\{\text{Gender, Age, BMI, Weekend}\}$, we apply the empirical likelihood  ratio test  in Section \ref{sec:coef} and the standard  likelihood ratio (\textsc{lr}) test assuming normal random effects, and we obtain the $p$-value for each of the $t$ values. 
The second column of Figure \ref{fig:realbeta} shows the $p$-values for each $t$ and for each of these four covariates.  
At the nominal $p$-value of 0.05, the \ELR~test shows that there is an gender effects when the activity counts are small (i.e., small $t$). In contrast, the standard  \textsc{lr}  test only shows such significance in a smaller interval from 0.23 to 0.42. For age, the \ELR~shows a significant effect for the large activity counts region (i.e., large $t$). 
Both the \ELR~and  \textsc{lr}  tests do not reject the null hypothesis that there is no effect of BMI, while the effects of weekend are statistically significant under almost the whole region of $t$. 


\subsection{Analysis of heritability of the activity distribution}
We then address the question whether the activity distribution is heritable, where the distribution  is summarized as the quantiles. This is equivalent to test the null hypothesis $H_0:\sigma_A^2(t)= 0,~t\in[0,1]$. 
For each quantile $t$, we first estimate the fixed effects using the least-square estimate  and then apply the proposed \ELR~to test the null hypothesis $H_0:\sigma_A^2(t)= 0$ and to compare the results  with the  \textsc{lr}  method. Figure \ref{fig:realherit} (a) gives the $p$-values at different  quantiles $t$.  It shows that the  test $H_0:\sigma_A^2(t)=0$ is rejected for  $t\in [0.375,0.958]$ based on the \ELR~test and $t\in[0.472,0.931]$ using  \textsc{lr}  at the nominal  0.05 significance level.  However, if we use the Bonferroni correction for multiple testing, only the proposed \ELR~test identifies significant heritability for the quantiles between 0.375 and 0.514. 
The $p$-value of global test $H_0:\sigma_A^2(t)\equiv 0,~t\in[0,1]$ is 0 when applying the proposed \ELR~with 1000 permutations.   Overall, our analysis shows that the activity distribution is heritable, especially in the quantile range from 0.375 to 0.514.

\ignore{
We finally apply the scanning procedure in order to identify the quantile intervals where the activity quantiles are heritable. We set the  the scanning lengths to $8,9,10,11,12$, and apply gELR to the candidate intervals. Since the scanning length may be less than the true length of signal interval, we combine the identified intervals that are very close to each other as recommended in \cite{jeng2010optimal}. This procedure results in identifying the signal interval $t\in[0.354,0.903]$ where the activity quantiles are inheritable, which is very close to the results from \ELR~ for each of the quantile $t$. 
}

\begin{figure}[!h]
    \centering
    \begin{subfigure}{.5\textwidth}
    \centering
    \includegraphics[width=\linewidth]{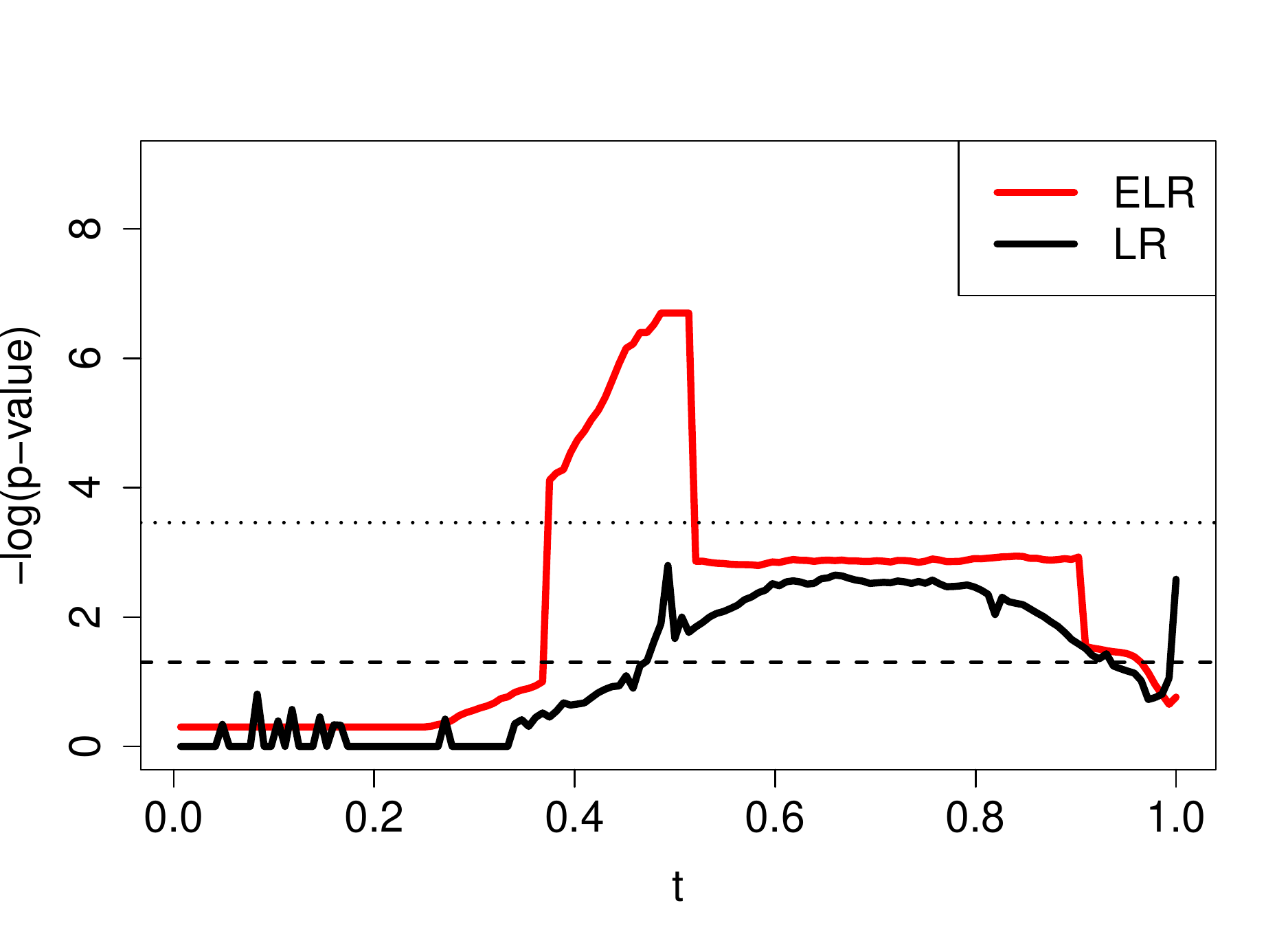}
    \caption{Results under preprocessing 1}
    \end{subfigure}%
    \begin{subfigure}{.5\textwidth}
    \centering
    \includegraphics[width=\linewidth]{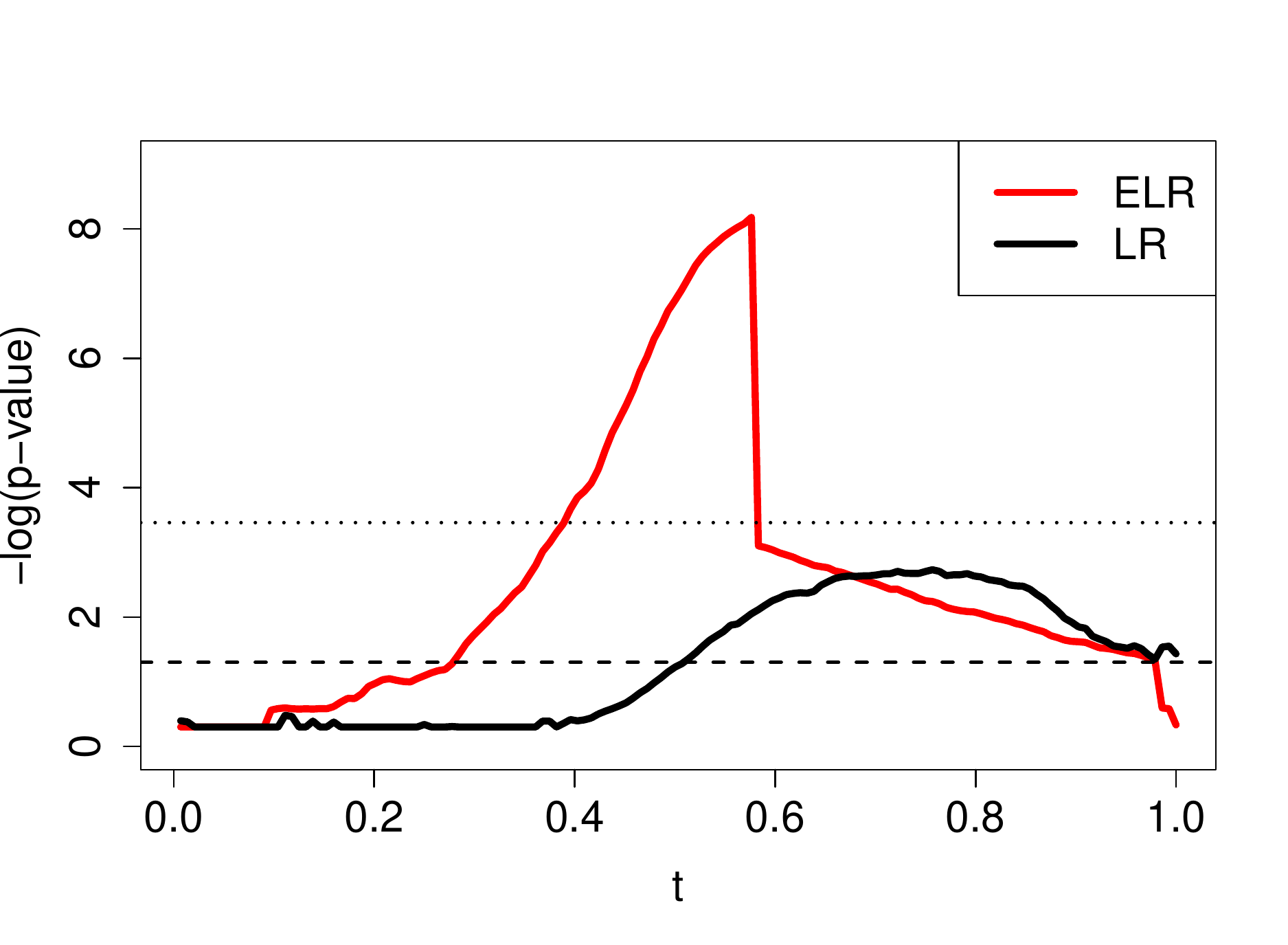}
    \caption{Results under preprocessing 2}
    \end{subfigure}

    \caption{The $-\log_{10}(p\text{-value})$ of testing heritability $H_0:\sigma_A^2(t)=0$ for different quantile values $t$. 
    The black dashed horizontal line is  the nominal  threshold $-\log_{10}(0.05)$, and the black dotted horizontal line is the Bonferroni corrected threshold $-\log_{10}(0.05/144)$.
    Left: results under preprocessing 1; right: results under preprocessing 2.
    }\label{fig:realherit}
\end{figure}

\subsection{Sensitivity analysis of heritability of the activity distribution}
To examine whether our previous preprocessing  steps  affect the analysis of heritability, we consider an alternative approach  to remove  the outliers. 
For each $t$, we remove the outliers of $y(t)$ defined as the  values that are greater than  3 standard deviations from its median.  After removing all the outliers and missing data, we have $n = 152$ twin families including 64 monozygotic twin families and 88 dizygotic twin families, and the total number of observations is 4,190. 

Under this  preprocessing method,   the $-\log_{10}(p$-value) of testing heritability is provided in Figure \ref{fig:realherit} (b), which shows  similar results as Figure \ref{fig:realherit} (a). 
The test $H_0:\sigma_A^2(t)=0$ is rejected for $t\in[0.285,0.979]$ when using the \ELR test and $t\in[0.514,1]$ using the \textsc{LR} test at the nominal 0.05 significance level. 
If we adopt the Bonferroni correction for multiple testing, only the proposed \ELR test  identified significant heritability under the quantiles between 0.396 and 0.576. 
The $p$-value of the  global test $H_0:\sigma_A^2(t)\equiv 0,~t\in[0,1]$ is 0 when applying the proposed \ELR with 1000 permutations.

\section{Discussion}\label{sec:discuss}
In this paper, we have developed an  empirical likelihood method for making inference of the variance components in  general linear mixed-effects models. The proposed empirical likelihood ratio test statistic can be applied to a large set of related outcomes such as different quantiles of the activity distribution when we analyze the wearable device data sets. 
Simulation studies show that the proposed methods control type 1 error much better than the likelihood ratio method when the normality assumptions do not hold. 

To address the unknown nuisance variance components, we assume its true value $\theta^*_{(1)}$ being  positive and thus as $n\rightarrow\infty$,  \eqref{eq:theta2} provides unbiased positive estimates  with probability 1. When applying the proposed methods to the real data, we note that  (\ref{eq:theta2}) may provide  negative estimators at some quantile $t$.  To solve this problem,  we first test whether the nuisance variance component (for example, $\sigma_C^2(t)$ or $\sigma_E^2(t)$) is zero at these quantile  points. If the null hypothesis is not rejected, we omit the nuisance variance components in the model and then apply the proposed \ELR test for the components of interest.

\section*{ACKNOWLEDGMENT}
This research was supported by the Intramural Research Program of the National Institute of Mental Health through grant ZIA MH002954-04 [Motor Activity Research Consortium for Health (mMARCH)]. We thank Dr. Hickie and Dr. Martin for sharing the Australian twin study data as part of the mMARCH network and the Genetic Epidemiology Research Branch at National Institute of Mental Health for processing the accelerometry data.

\section*{APPENDIX}
\appendix

\renewcommand{\thetheorem}{A\arabic{theorem}}
\renewcommand{\thelemma}{A\arabic{lemma}}
\renewcommand{\theequation}{A\arabic{equation}}

\setcounter{theorem}{0}
\section{Proofs and complements}
\subsection{Proof of Theorem \ref{th:3}}
    To prove Theorem \ref{th:3}, we first consider the setting with known $\beta^*$.
We define $Z_i(\theta_1)$, $L_1(\theta_1)$, $\ELR_1(\theta_1^0)$, and $\tilde \nu_{1n}^2(\theta_1^0)$ in the same way as $\hat Z_i(\theta_1)$, $L(\theta_1)$, $\ELR(\theta_1^0)$, and $\hat \nu_{1n}^2(\theta_1^0)$, respectively, with $\hat R_i$ replaced by $R_i$.
Under Conditions  \ref{cond:1-2} and \ref{cond:1-1}, we derive the asymptotic distribution of $\ELR_1(\theta_1^0)$ in the following theorem.
\begin{theorem}\label{th:1}
    Let $\tilde c_n(\theta_1^0)=\tilde\nu_{2n}^2(\theta_1^0)/\tilde\nu_{1n}^2(\theta_1^0)$, where $\tilde\nu_{1n}^2(\theta_1^0)$ is a consistent estimator of the asymptotic variance of $n^{-1/2}\sum_{i=1}^nZ_i(\theta_1^0)$ and $\tilde\nu_{2n}^2(\theta_1^0)=n^{-1}\sum_{i=1}^nZ_i^2(\theta_1^0)$. 
    If $\theta^*_{(1)}\in R_+^{d-1}$, then under Conditions \ref{cond:1-1} and \ref{cond:1-2}, as $n\rightarrow\infty$, $\tilde c_n(\theta_1^0)\big(-2\log \ELR_1(\theta_1^0)\big)\rightarrow \chi_1^2$  in distribution when $\theta_1^0>0$, and $\tilde c_n(0)(-2\log \ELR_1(0))\rightarrow U_+^2$ in distribution, where $U\sim N(0,1)$ and $U_+=\max(U,0)$.
\end{theorem}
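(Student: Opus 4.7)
The plan is to combine a standard empirical-likelihood dual expansion of the numerator with a sign-driven case split for the denominator. By Lagrange duality, Condition \ref{cond:1-1} gives asymptotically
\[
-2\log L_1(\theta_1^0)=2n\log n+2\sum_{i=1}^n\log\{1+\lambda Z_i(\theta_1^0)\},
\]
with $\lambda$ solving $\sum_i Z_i(\theta_1^0)/\{1+\lambda Z_i(\theta_1^0)\}=0$. Because the identity $\sum_i Z_i(\theta_1^0)=\sum_i D_i(\theta_1^0)$ from \eqref{eq:zdm} (with $\hat R_i$ replaced by $R_i$) rewrites the sum as a cross-subject independent, zero-mean sum, Condition \ref{cond:1-2} yields $\sum_i Z_i(\theta_1^0)=O_p(n^{1/2})$ and $\max_i|Z_i(\theta_1^0)|=o_p(n^{1/2})$. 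Owen's standard argument then gives $\lambda=\sum_i Z_i/\sum_i Z_i^2+o_p(n^{-1/2})=O_p(n^{-1/2})$, and Taylor expanding the log delivers the quadratic approximation
\[
-2\log L_1(\theta_1^0)-2n\log n=\frac{\{\sum_i Z_i(\theta_1^0)\}^2}{\sum_i Z_i^2(\theta_1^0)}+o_p(1).
\]

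For the denominator, $Z_i(\theta_1)=Z_i(\theta_1^0)-(\theta_1-\theta_1^0)\tr(\Phi_{i1}^2)$ is affine in $\theta_1$, so $\sum_i Z_i(\theta_1)=0$ has the unique root $\hat\theta_1=\theta_1^0+E_{11}^{-1}\sum_i Z_i(\theta_1^0)=\theta_1^0+O_p(n^{-1/2})$. At $\hat\theta_1$ we have $\lambda(\hat\theta_1)=0$ and $L_1(\hat\theta_1)=n^{-n}$ exactly, and the universal inequality $L_1\le n^{-n}$ identifies $\hat\theta_1$ as the unrestricted maximizer. When $\theta_1^0>0$, $\hat\theta_1>0$ with probability tending to one, so $\max_{\theta_1\ge 0}L_1=n^{-n}$. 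When $\theta_1^0=0$, the same equality holds on the event $\{\sum_i Z_i(0)\ge 0\}$; on the complementary event $\{\sum_i Z_i(0)<0\}$, the affine function $\sum_i Z_i(\theta_1)$ stays strictly negative throughout $[0,\infty)$, the quadratic approximation of the log-likelihood is strictly minimized at the boundary, and hence $\max_{\theta_1\ge 0}L_1=L_1(0)$, giving $\ELR_1(0)=1$ and $-2\log \ELR_1(0)=0$.

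Multiplying the quadratic expansion by $\tilde c_n(\theta_1^0)=\tilde\nu_{2n}^2/\tilde\nu_{1n}^2$ (which replaces the naive moment estimator $\tilde\nu_{2n}^2$, biased because the $Z_i$'s are not fully independent through $\hat\theta_{(1)}$, by the correct asymptotic variance derived around \eqref{eq:var}) yields
\[
\tilde c_n(\theta_1^0)\{-2\log \ELR_1(\theta_1^0)\}=\frac{\{n^{-1/2}\sum_i Z_i(\theta_1^0)\}^2}{\tilde\nu_{1n}^2(\theta_1^0)}\,J_{\theta_1^0}+o_p(1),
\]
with $J_{\theta_1^0}\equiv 1$ when $\theta_1^0>0$ and $J_{\theta_1^0}=I\{\sum_i Z_i(0)\ge 0\}$ when $\theta_1^0=0$. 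The Lindeberg CLT applied to the independent $D_i(\theta_1^0)$'s, combined with consistency of $\tilde\nu_{1n}^2$, delivers $n^{-1/2}\sum_i Z_i(\theta_1^0)/\tilde\nu_{1n}(\theta_1^0)\Rightarrow U\sim N(0,1)$, whence the continuous mapping theorem yields $U^2=\chi_1^2$ in the interior case and $U^2 I(U\ge 0)=U_+^2$ at the boundary. The main technical obstacle is the denominator analysis: exploiting the affine structure of $Z_i$ in $\theta_1$ to pin down $\max_{\theta_1\in\R}L_1=n^{-n}$ exactly at $\hat\theta_1$, then verifying that the constraint $\theta_1\ge 0$ binds precisely on $\{\sum_i Z_i(0)<0\}$ so that the constrained maximum collapses back to $L_1(0)$. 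The rest is a routine synthesis of Owen's dual expansion with the CLT enabled by the $\sum Z_i=\sum D_i$ identity.
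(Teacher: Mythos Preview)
Your proposal is correct and structurally identical to the paper's proof: Owen's dual expansion gives the quadratic form $\{\sum_i Z_i(\theta_1^0)\}^2/\sum_i Z_i^2(\theta_1^0)$ for the numerator, a sign-based case split handles the constrained denominator, and the CLT (enabled by the $\sum_i Z_i=\sum_i D_i$ identity) completes the argument. The one methodological difference lies in how you locate the unconstrained maximizer. The paper packages this in a separate lemma using the Qin--Lawless (1994) estimating-equations framework: it sets up $Q_{1n},Q_{2n}$, Taylor-expands at $(\theta_1^*,0)$, and solves the resulting $2\times 2$ linear system for $(\check\theta_1-\theta_1^*,\check\lambda)$. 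You instead exploit directly that $Z_i(\theta_1)$ is \emph{exactly} affine in $\theta_1$ with slope $-\|\Phi_{i1}\|_F^2$, so the root $\hat\theta_1$ of $\sum_i Z_i(\theta_1)=0$ is available in closed form and coincides with the paper's $\check\theta_1$ without any expansion. Your route is more elementary here because the Qin--Lawless machinery is designed for general (nonlinear) estimating functions and is overkill when the score is linear; the paper's approach, on the other hand, would transfer unchanged to more complicated variance parameterizations. Both proofs leave the same small step implicit: that when the unconstrained maximizer lies in $(-\infty,0)$, the constrained maximum over $[0,\infty)$ is attained at the boundary $\theta_1=0$. You invoke the quadratic approximation, the paper simply asserts $\tilde\theta_1=\check\theta_1 I(\check\theta_1\ge 0)$; either way this follows from standard (local) concavity of the empirical log-likelihood in $\theta_1$.
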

\begin{proof}\label{pf:th:1}
    For simplicity, we sometimes use $Z_i$ to denote $Z_i(\theta_1)$ when there is no confusion.
    Using the method of Lagrange multipliers, let 
    \[
    \mathcal L = -\sum_{i=1}^n\log p_i + \kappa(\sum_{i=1}^np_i-1)+\lambda_0\sum_{i=1}^np_iZ_i.
    \]
    Since
    \[
    \frac{\partial \mathcal L}{\partial p_i} = -\frac{1}{p_i}+\kappa+\lambda_0 Z_i = 0,
    \]
    we have 
    \begin{equation}\label{eq:p1}
        p_i = \frac{1}{\kappa+\lambda_0 Z_i} \quad \text{ and } \quad \kappa = n.
    \end{equation}
    Plugging $\lambda_0=n\lambda$ into (\ref{eq:p1}), we obtain
    \begin{equation}\label{eq:p2}
        p_i = \frac{1}{n(1+\lambda Z_i)}.
    \end{equation}
    Since
    \begin{equation}\label{eq:Q1}
        0 = \sum_{i=1}^np_iZ_i = \sum_{i=1}^n\frac{Z_i}{n(1+\lambda Z_i)},
    \end{equation}
    under Condition \ref{cond:1-2}, one can show that
    \[
    \lambda = \big(\sum_{i=1}^nZ_i^2\big)^{-1}{\sum_{i=1}^n Z_i} + o_p(n^{-1/2}) \text{ by Taylor expansion}.
    \]
    Let $W_1(\theta_1)=n^nL_1(\theta_1)$. We have 
    \begin{align}\label{eq:R1}
        -2\log (W_1(\theta_1)) & = -2\sum_{i=1}^n(\log p_i+\log n) = 2\sum_{i=1}^n\log(1+\lambda Z_i) \notag\\
        & = 2\sum_{i=1}^n\big(\lambda Z_i-\frac{1}{2}(\lambda Z_i)^2\big)+o_p(1) \text{ (by Taylor expansion)} \notag\\
        & = \big(\sum_{i=1}^nZ_i\big)^2\big(\sum_{i=1}^nZ_i^2\big)^{-1}+o_p(1).
    \end{align}
    
    \begin{enumerate}
        \item[(1)] If $\theta_1^0>0$, then Lemma \ref{lem:theta} implies 
        \begin{align*}
            & \tilde c_n(\theta_1^0)\big(-2\log \frac{L_1(\theta_1^0)}{\max_{\theta_1\geq 0}L_1(\theta_1)}\big) = \tilde c_n(\theta_1^0)\big(-2\log W_1(\theta_1^0)\big) \\
            = & \big(n^{-1/2}\sum_{i=1}^nZ_i(\theta_1^0)\big)^2/\tilde\nu_{1n}^2(\theta_1^0) +o_p(1) \\
            = & \frac{\big( n^{-1/2}\alpha^{-1}\sum_{i=1}^n\big\langle\Phi_{i1}-\sum_{q=1}^{d-1}F_q\Phi_{iq+1}, R_i-\theta_1^0\Phi_{i1}\big\rangle\big)^2}{n^{-1}\alpha^{-2}\sum_{i=1}^n\big\langle R_i-H_i((\theta_1^0,\tilde{\theta}_{(1)}^T)^T), \Phi_{i1}-\sum_{q=1}^{d-1}F_q\Phi_{iq+1}\big\rangle^2} +o_p(1).
        \end{align*}
        Since
        \begin{align*}
            & n^{-1}\alpha^{-2}\sum_{i=1}^n\big\langle R_i-H_i((\theta_1^0,\tilde{\theta}_{(1)}^T)^T), \Phi_{i1}-\sum_{q=1}^{d-1}F_q\Phi_{iq+1}\big\rangle^2 \\
            = & \var\big(n^{-1/2}\alpha^{-1}\sum_{i=1}^n\big\langle\Phi_{i1}-\sum_{q=1}^{d-1}F_q\Phi_{iq+1}, R_i-\theta_1^0\Phi_{i1}\big\rangle\big) + o_p(1)
        \end{align*}
        under Condition \ref{cond:1-2},  it implies $\tilde c_n(\theta_1^0)\big(-2\log \ELR_1(\theta_1^0)\big)\rightarrow\chi^2_1$ in distribution when $\theta_1^0>0$.
        
        \item[(2)] When $\theta_1^0=0$, Lemma \ref{lem:theta} implies
        \[
        \tilde c_n(0)\big(-2\log \frac{L_1(0)}{\max_{\theta_1\geq 0}L_1(\theta_1)}\big) = 
        \tilde c_n(0)\big(-2\log W_1(0)\big) I\big(\sum_{i=1}^nZ_i(0)\geq 0\big) 
        \]
         as $n$ is large enough.
        Therefore, $\tilde c_n(0)\big(-2\log \ELR_1(0)\big) \rightarrow U_+^2$
        in distribution, where $U\sim N(0,1)$ and $U_+=\max(U,0)$.
    \end{enumerate}
\end{proof}

\begin{lemma}\label{lem:theta}
    Let $W_1(\theta_1)=n^nL_1(\theta_1)$ and $\tilde\theta_1 = \arg\max_{\theta_1\geq 0} W_1(\theta_1)$.
    If the true value $\theta^*_1>0$, then $W_1(\tilde\theta_1)=1$ as $n$ is large enough. If the true value $\theta^*_1=0$, then $W_1(\tilde\theta_1)=I(\sum_{i=1}^nZ_i(0)\geq 0)+W_1(0)I(\sum_{i=1}^nZ_i(0)< 0)$ as $n$ is large enough.  
\end{lemma}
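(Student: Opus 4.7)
The strategy is to identify $W_1(\theta_1)$ on all of $\mathbb{R}$, establish monotonicity about its unique global maximizer, and then intersect with the constraint $\theta_1\geq 0$.

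First, because $\tilde\theta_{(1)}=(\Xi^{-1})_{-1}^T\tilde\Upsilon$ is built from $R_1,\dots,R_n$ and does not depend on $\theta_1$, the function
\[
Z_i(\theta_1)=\tr(\Phi_{i1}R_i)-\theta_1\tr(\Phi_{i1}^2)-\sum_{q=2}^d\tilde\theta_q\tr(\Phi_{i1}\Phi_{iq})
\]
is affine in $\theta_1$, so $\sum_{i=1}^n Z_i(\theta_1)=C-A\theta_1$ with $A=\sum_i\tr(\Phi_{i1}^2)>0$ has a unique root $\bar\theta_1=C/A$. Writing out the full system $\sum_i\tr\{\Phi_{ik}(R_i-\sum_l\theta_l\Phi_{il})\}=0$ for $k=1,\ldots,d$ with $\theta_l=\tilde\theta_l$ for $l\ge 2$, one recognizes $\bar\theta_1=[\Xi^{-1}\tilde\Upsilon]_1$, an unbiased estimator of $\theta_1^*$; under Condition~\ref{cond:1-2} a standard WLLN yields $\bar\theta_1\to\theta_1^*$ in probability. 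Next, because $\prod p_i$ subject to $\sum p_i=1,\ p_i\ge 0$ is uniquely maximized at $p_i=1/n$ with value $n^{-n}$, the additional linear constraint $\sum p_i Z_i(\theta_1)=0$ is compatible with this optimum precisely when $\sum_i Z_i(\theta_1)=0$. Hence $W_1(\bar\theta_1)=1$ and $W_1(\theta_1)<1$ for every other $\theta_1$ in the feasible region.

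I would then establish monotonicity of $W_1$ via the Lagrangian representation used in~\eqref{eq:R1}: $-2\log W_1(\theta_1)=2\sum_i\log\{1+\lambda(\theta_1)Z_i(\theta_1)\}$ with $\lambda(\theta_1)$ determined by $\sum_i Z_i(\theta_1)/\{1+\lambda Z_i(\theta_1)\}=0$, both well defined on the open set where $0\in\operatorname{int}\operatorname{ch}\{Z_i(\theta_1)\}$, which by Condition~\ref{cond:1-1} contains a neighbourhood of the true value with probability tending to one. Invoking the envelope theorem to cancel the $\lambda'(\theta_1)$ contribution and using $Z_i'(\theta_1)=-\tr(\Phi_{i1}^2)$,
\[
\frac{d}{d\theta_1}\{-2\log W_1(\theta_1)\}=-2\lambda(\theta_1)\sum_{i=1}^n\frac{\tr(\Phi_{i1}^2)}{1+\lambda(\theta_1)Z_i(\theta_1)},
\]
whose sign is opposite to that of $\lambda(\theta_1)$. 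Since $\lambda\mapsto\sum_i Z_i/(1+\lambda Z_i)$ is strictly decreasing and equals $\sum_i Z_i(\theta_1)$ at $\lambda=0$, $\operatorname{sgn}\lambda(\theta_1)=\operatorname{sgn}\sum_i Z_i(\theta_1)$. Hence $W_1$ is strictly increasing for $\theta_1<\bar\theta_1$ and strictly decreasing for $\theta_1>\bar\theta_1$ on the region where $L_1>0$; outside that region $L_1=0$, which cannot be the argmax.

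Combining: if $\theta_1^*>0$, consistency of $\bar\theta_1$ gives $\bar\theta_1>0$ with probability tending to one, so $\tilde\theta_1=\bar\theta_1$ and $W_1(\tilde\theta_1)=1$. If $\theta_1^*=0$, on the event $\{\sum_i Z_i(0)\ge 0\}$ the root $\bar\theta_1\ge 0$ is feasible and $W_1(\tilde\theta_1)=1$, while on $\{\sum_i Z_i(0)<0\}$ we have $\bar\theta_1<0$, so by the monotonicity just established $W_1$ is strictly decreasing on $[0,\infty)$, yielding $\tilde\theta_1=0$ and $W_1(\tilde\theta_1)=W_1(0)$. This matches the indicator decomposition stated in the lemma.

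The main technical obstacle is the envelope calculation together with the verification that $1+\lambda(\theta_1)Z_i(\theta_1)>0$ uniformly in $i$ over the region where monotonicity is invoked; this requires Condition~\ref{cond:1-1} to persist in a neighbourhood of $\theta_1^*$ alongside the moment bounds from Condition~\ref{cond:1-2}. The identification $\bar\theta_1=[\Xi^{-1}\tilde\Upsilon]_1$ and its consistency, though conceptually routine, are precisely what translate the probabilistic event $\bar\theta_1\ge 0$ into the deterministic indicator $I(\sum_i Z_i(0)\ge 0)$ appearing in the statement.
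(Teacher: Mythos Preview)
Your argument is correct and takes a genuinely different, more elementary route than the paper. The paper follows the Qin--Lawless template: it writes down the joint stationarity system $Q_{1n}(\theta_1,\lambda)=Q_{2n}(\theta_1,\lambda)=0$, Taylor-expands it at $(\theta_1^*,0)$, and extracts an asymptotic expression $\check\theta_1-\theta_1^*=-\{\sum_iZ_i(\theta_1^*)\}/\{\sum_i\partial_{\theta_1}Z_i(\theta_1^*)\}+o_p(n^{-1/2})$ for the unconstrained maximizer; only then does it conclude that $\check\theta_1>0$ (or not) for large $n$. You instead exploit the \emph{exact} affine structure of $Z_i(\theta_1)$ to pin down the unconstrained maximizer as the root $\bar\theta_1$ of $\sum_iZ_i(\theta_1)=0$, recognize it as $[\Xi^{-1}\tilde\Upsilon]_1$, and obtain consistency by a direct WLLN rather than a local expansion. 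The second difference is that you prove \emph{global} monotonicity of $W_1$ on either side of $\bar\theta_1$ via the sign of $\lambda(\theta_1)$, whereas the paper uses the derivative identity only as a first-order condition; your monotonicity step makes the boundary case $\theta_1^*=0$ cleaner, since it immediately forces the constrained optimum to $0$ whenever $\bar\theta_1<0$, a point the paper leaves somewhat implicit. The trade-off is generality: the paper's Taylor-expansion argument would carry over to estimating functions nonlinear in $\theta_1$, while your route leans on the affine dependence (and the scalar sign argument for $\lambda$) that is special to this setting.
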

\begin{proof}
    Let $\check\theta_1= \arg\max_{\theta_1} W_1(\theta_1)$.
    We use a similar method in \cite{qin1994empirical}.
    Since
    \[
    \check\theta_1= \arg\min_{\theta_1} -2\log W_1(\theta_1),
    \]
    \begin{align}\label{eq:Q2}
        0=\frac{\partial(-2\log W_1(\theta_1))}{\partial\theta_1}|_{\theta_1=\check\theta_1} & = 2\sum_{i=1}^n\frac{\frac{\partial\lambda}{\partial\theta_1}Z_i+\lambda\frac{\partial Z_i}{\partial\theta_1}}{1+\lambda Z_i}|_{\theta_1=\check\theta_1}\notag \\
        & = 2\lambda\sum_{i=1}^n\frac{1}{1+\lambda Z_i}\frac{\partial Z_i}{\partial \theta_1}|_{\theta_1=\check\theta_1} \text{ (by (\ref{eq:Q1}))}.
    \end{align}
    Let $\check\lambda = \lambda(\check\theta_1)$. We note $\check\theta_1$ and $\check\lambda$ satisfy
    \[
    Q_{1n}(\check\theta_1,\check\lambda)  = 0, \quad Q_{2n}(\check\theta_1,\check\lambda) =0,
    \]
    where
    \begin{align*}
        Q_{1n}(\theta_1,\lambda) & = \frac{1}{n}\sum_{i=1}^n\frac{Z_i(\theta_1)}{1+\lambda Z_i(\theta_1)} \text{ (by (\ref{eq:Q1}))},\\
        Q_{2n}(\theta_1,\lambda) & = \frac{\lambda}{n}\sum_{i=1}^n\frac{1}{1+\lambda Z_i(\theta_1)}\frac{\partial Z_i(\theta_1)}{\partial \theta_1} \text{ (by (\ref{eq:Q2}))}.
    \end{align*}
    Taking derivatives about $\theta_1$ and $\lambda$, we have
    \begin{align*}
        & \frac{\partial Q_{1n}(\theta_1,0)}{\partial\theta_1} =\frac{1}{n} \sum_{i=1}^n\frac{\partial Z_i(\theta_1)}{\partial\theta_1}, & \frac{\partial Q_{1n}(\theta_1,0)}{\partial\lambda} = -\frac{1}{n}\sum_{i=1}^nZ_i(\theta_1)^2, \\
        & \frac{\partial Q_{2n}(\theta_1,0)}{\partial\theta_1} = 0,& \frac{\partial Q_{2n}(\theta_1,0)}{\partial\lambda} = \frac{1}{n}\sum_{i=1}^n\frac{\partial Z_i(\theta_1)}{\partial\theta_1}.
    \end{align*}
    Expanding $Q_{1n}$ and $Q_{2n}$ at $(\theta_1=\theta^*_1,\lambda=0)$, we have 
    \begin{align}
        0 = & Q_{1n}(\check\theta_1,\check\lambda) \notag \\
        = & Q_{1n}(\theta^*_1,0) + \frac{\partial Q_{1n}(\theta^*_1,0)}{\partial\theta_1}(\check\theta_1-\theta^*_1) + \frac{\partial Q_{1n}(\theta^*_1,0)}{\partial\lambda}\check\lambda + o_p(n^{-1/2}),  \label{eq:Q10} \\
        0 = & Q_{2n}(\check\theta_1,\check\lambda) \notag \\
        = & Q_{2n}(\theta^*_1,0) + \frac{\partial Q_{2n}(\theta^*_1,0)}{\partial\theta_1}(\check\theta_1-\theta^*_1) + \frac{\partial Q_{2n}(\theta_1^*,0)}{\partial\lambda}\check\lambda + o_p(n^{-1/2}). \label{eq:Q20}
    \end{align}
    (\ref{eq:Q10}) and (\ref{eq:Q20}) give
    \[
    \begin{pmatrix}
        \check\theta_1 - \theta^*_1 \\
        \check\lambda
    \end{pmatrix}=\begin{pmatrix}
        \frac{1}{n} \sum_{i=1}^n\frac{\partial Z_i(\theta^*_1)}{\partial\theta_1} & -\frac{1}{n} \sum_{i=1}^nZ_i(\theta^*_1)^2 \\
        0 & \frac{1}{n} \sum_{i=1}^n\frac{\partial Z_i(\theta^*_1)}{\partial\theta_1}
    \end{pmatrix}^{-1}\begin{pmatrix}
        -\frac{1}{n}\sum_{i=1}^nZ_i(\theta^*_1)+o_p(n^{-1/2})\\
        o_p(n^{-1/2})
    \end{pmatrix}
    \]
    Hence,
    \begin{equation}\label{eq:thetacheck}
        \check\theta_1-\theta^*_1 = -\frac{n^{-1}\sum_{i=1}^nZ_i(\theta^*_1)}{n^{-1}\sum_{i=1}^n\frac{\partial Z_i(\theta^*_1)}{\partial\theta_1}}+o_p(n^{-1/2}),
    \end{equation}
    where $(\sum_{i=1}^nZ_i(\theta^*_1))/(\sum_{i=1}^n\partial Z_i(\theta^*_1)/\partial\theta_1)=O_p(n^{-1/2})$.
    
    When $\theta^*_1>0$, (\ref{eq:thetacheck}) implies $\check\theta_1>0$ as $n$ is large enough. Thus, $\tilde{\theta}_1=\check\theta_1$ as $n$ is large enough. 
    Then $\tilde{\theta}_1$ satisfies (\ref{eq:Q2}), i.e.,
    \begin{equation}\label{eq:lam}
        2\lambda\sum_{i=1}^n\frac{1}{1+\lambda Z_i}(-\|\Phi_{i1}\|_F^2) = 0.
    \end{equation}
    Plugging (\ref{eq:p2}) into (\ref{eq:lam}), we have $\lambda=0$. Then $p_i=n^{-1}$ and $W_1(\tilde\theta_1)=1$.
    
    When $\theta^*_1=0$, $\tilde{\theta}_1=\check\theta_1I(\check\theta_1\geq 0)$ as $n$ is large enough.
    Since $\sum_{i=1}^n\partial Z_i(0)/\partial\theta_1=-\sum_{i=1}^n\|\Phi_{i1}\|_F^2<0$, we have $\tilde\theta_1=\check\theta_1I(\sum_{i=1}^nZ_i(0)\geq 0)$ as $n$ is large enough. So $W_1(\tilde\theta_1)=I(\sum_{i=1}^nZ_i(0)\geq 0)+W_1(0)I(\sum_{i=1}^nZ_i(0)< 0)$.
\end{proof}

\begin{proof}[of Theorem \ref{th:3}]
    Let $\Delta$ be a $d$-dimensional vector with the $k$th element $\Delta_k=\sum_{i=1}^n\tr(\Phi_{ik}E(\hat\epsilon_i))$. 
    Let $\varsigma_i=(\tr(\Phi_{i1}\Phi_{i2}),\cdots,\tr(\Phi_{i1}\Phi_{id}))^T$.
    For $i=1,\cdots,n$,
    \begin{align*}
        & E(\hat R_i) = H_i(\theta^*)+E(\hat\epsilon_i), \\
        & E(\hat\theta_{(1)}) = \theta^*_{(1)}+\big(\Xi^{-1}\big)_{-1}^T\Delta,
    \end{align*}
    so we have
    \begin{align}\label{eq:hatZi}
        & E(\hat Z_i(\theta_1^0)) = \tr(\Phi_{i1}E(\hat\epsilon_i))-\varsigma_i^T(\Xi^{-1})_{-1}^T\Delta=O(n^{-1}), \\
        & E(n^{-1/2}\sum_{i=1}^n\hat Z_i(\theta_1^0)) = o(1) \notag
    \end{align}
    under Proposition \ref{prop:1}.
    Then with similar techniques in the proof of Theorem \ref{th:1}, Theorem \ref{th:3} can be proved.
\end{proof}

\subsection{Proof of of Proposition \ref{prop:1}}
\begin{proof}[of Proposition \ref{prop:1}]
    Since 
    \[
    n^{1/2}(\hat{\beta}-\beta^*) = n^{1/2}(X^TX)^{-1}X^Tr = (n^{-1}X^TX)^{-1}n^{-1/2}X^Tr,
    \]
    we have $n^{1/2}(\hat\beta-\beta^*)\xrightarrow{d}\Sigma^{-1}\eta$.
    
    Under Condition \ref{cond:1-2}, $E\|r_i\|_2^2$ and $E\|r_i\|_2^4$ are bounded uniformly.
    Since 
    \begin{align*}
        nE(r_i(\beta^*-\hat\beta)^TX_i^T)=& -E(r_i\sum_{k=1}^nr_k^TX_k(n^{-1}X^TX)^{-1}X_i^T) \\
        = & -E(r_ir_i^TX_i(n^{-1}X^TX)^{-1} X_i^T)\rightarrow O(1),
    \end{align*}
    \[
    nE(X_i(\beta^*-\hat\beta)(\beta^*-\hat\beta)^TX_i^T)\rightarrow O(1),
    \]
    we have $E({\hat\epsilon_i})=O(n^{-1})$.
    
    Note that for $i\neq j$,
    \begin{align*}
        & \cov({r_ir_i^T},{\hat\epsilon_j}) \\
        = & \cov({r_ir_i^T},{r_j(\beta^*-\hat{\beta})^TX_j^T}) + \cov({r_ir_i^T},{X_j(\beta^*-\hat{\beta})r_j^T}) + \cov({r_ir_i^T},{X_j(\beta^*-\hat{\beta})(\beta^*-\hat{\beta})^TX_j^T}) \\
        = & \cov({r_ir_i^T},{X_j(\beta^*-\hat{\beta})(\beta^*-\hat{\beta})^TX_j^T}).
    \end{align*}
    Since
    \begin{align*}
        & n^2\cov({r_ir_i^T},{X_j(\beta^*-\hat{\beta})(\beta^*-\hat{\beta})^TX_j^T}) \\
        = & \cov({r_ir_i^T},{X_j(n^{-1}X^TX)^{-1}\sum_{l=1}^nX_l^Tr_l\sum_{k=1}^nr_k^TX_k(n^{-1}X^TX)^{-1}X_j^T}) \\
        \rightarrow & \cov({r_ir_i^T},{X_j\Sigma^{-1}X_i^Tr_ir_i^TX_i\Sigma^{-1}X_j^T}) = O(1),
    \end{align*}
    $\cov({r_ir_i^T},{\hat\epsilon_j})=O(n^{-2})$.
    
    For $\cov({\hat\epsilon_i},{\hat\epsilon_j}),~i\neq j$, we only analyze $\cov({r_i(\beta^*-\hat{\beta})^TX_i^T},{r_j(\beta^*-\hat{\beta})^TX_j^T})$, $\cov({r_i(\beta^*-\hat{\beta})^TX_i^T},$ ${X_j(\beta^*-\hat{\beta})(\beta^*-\hat{\beta})^TX_j^T})$ and $\cov({X_i(\beta^*-\hat{\beta})(\beta^*-\hat{\beta})^TX_i^T},$ ${X_j(\beta^*-\hat{\beta})(\beta^*-\hat{\beta})^TX_j^T})$.
    Since
    \begin{align*}
        & n^2\cov({r_i(\beta^*-\hat{\beta})^TX_i^T},{r_j(\beta^*-\hat{\beta})^TX_j^T}) \\
        = & \cov({r_i\sum_{l=1}^nr_l^TX_l(n^{-1}X^TX)^{-1}X_i^T},{r_j\sum_{k=1}^nr_k^TX_k(n^{-1}X^TX)^{-1}X_j^T}) \\
        \rightarrow & \cov({r_ir_j^TX_j\Sigma^{-1}X_i^T},{r_jr_i^TX_i\Sigma^{-1}X_j^T})=O(1),
    \end{align*}
    \[
    n^2\cov({X_i(\beta^*-\hat{\beta})(\beta^*-\hat{\beta})^TX_i^T},{X_j(\beta^*-\hat{\beta})(\beta^*-\hat{\beta})^TX_j^T})\rightarrow O(1),
    \]
    \begin{align*}
        & \cov({r_i(\beta^*-\hat{\beta})^TX_i^T},{X_j(\beta^*-\hat{\beta})(\beta^*-\hat{\beta})^TX_j^T}) \\
        = & -n^{-3}\cov({r_i\sum_{k=1}^nr_k^TX_k(n^{-1}X^TX)^{-1}X_i^T},{X_j(n^{-1}X^TX)^{-1}\sum_{s=1}^n\sum_{t=1}^nX_s^Tr_sr_t^TX_t(n^{-1}X^TX)^{-1}X_j^T}) \\
        = & -n^{-3}\cov({r_ir_i^TX_i(n^{-1}X^TX)^{-1}X_i^T},{X_j(n^{-1}X^TX)^{-1}X_i^Tr_ir_i^TX_i(n^{-1}X^TX)^{-1}X_j^T}) \\
        &  -n^{-3}\sum_{k\neq i}\cov({r_ir_k^TX_k(n^{-1}X^TX)^{-1}X_i^T},{X_j(n^{-1}X^TX)^{-1}(X_i^Tr_ir_k^TX_k+X_k^Tr_kr_i^TX_i)(n^{-1}X^TX)^{-1}X_j^T}), 
    \end{align*}
    we have $\cov({r_i(\beta^*-\hat{\beta})^TX_i^T},{r_j(\beta^*-\hat{\beta})^TX_j^T})$, $\cov({r_i(\beta^*-\hat{\beta})^TX_i^T},{X_j(\beta^*-\hat{\beta})(\beta^*-\hat{\beta})^TX_j^T})$, and $\cov({X_i(\beta^*-\hat{\beta})(\beta^*-\hat{\beta})^TX_i^T},{X_j(\beta^*-\hat{\beta})(\beta^*-\hat{\beta})^TX_j^T})=O(n^{-2})$.
    Hence, $\cov({\hat\epsilon_i},{\hat\epsilon_j})=O(n^{-2})$.
\end{proof}

\subsection{proof of equation \eqref{eq:zdm}}
\begin{proof}[of equation \eqref{eq:zdm}]
    Rewrite $\Xi$ as $\Xi=\begin{psmallmatrix} E_{11} & E_{12} \\ E_{21} & E_{22}
    \end{psmallmatrix}$ with $E_{11}$ being a scalar. Rewrite $\hat\Upsilon$ as $\hat\Upsilon=(\hat\Upsilon_1,\hat\Upsilon_{(1)})^T$.
    So
    \begin{align*}
        \hat{\theta}_{(1)} = (\Xi^{-1})_{-1}^T\hat\Upsilon = -E_{22}^{-1}E_{21}q^{-1}\hat\Upsilon_1+E_{22}^{-1}\hat\Upsilon_{(1)}+E_{22}^{-1}E_{21}q^{-1}E_{12}E_{22}^{-1}\hat\Upsilon_{(1)},
    \end{align*}
    where $q=E_{11}-E_{12}E_{22}^{-1}E_{21}$.
    Let $F=E_{22}^{-1}E_{21}$. We obtain
    \begin{align*}
        \sum_{i=1}^n\hat Z_i(\theta_1^0) = &\hat\Upsilon_1 - F^T\big(-q^{-1}E_{21}\hat\Upsilon_1+\hat\Upsilon_{(1)}+E_{21}q^{-1}E_{12}E_{22}^{-1}\hat\Upsilon_{(1)}\big) -E_{11}\theta_1^0 \\
        = & (1+q^{-1}F^TE_{21})\hat\Upsilon_1 - (1+F^TE_{21}q^{-1})F^T\hat\Upsilon_{(1)} -E_{11}\theta_1^0 \\
        = & (1+q^{-1}F^TE_{21})\sum_{i=1}^n\big\langle\Phi_{i1}-\sum_{q=1}^{d-1}F_q\Phi_{iq+1},\hat R_i\big\rangle -E_{11}\theta_1^0 \\
        = & \sum_{i=1}^n\hat D_i(\theta_1^0),
    \end{align*}
    where $\hat D_i(\theta_1^0)=\alpha^{-1}\big\langle\Phi_{i1}-\sum_{q=1}^{d-1}F_q\Phi_{iq+1},\hat R_i-\theta_1^0\Phi_{i1}\big\rangle$.
    Note that for any $b=(b_1,\cdots,b_{d-1})^T$,
    \begin{align*}
        \sum_{i=1}^n\big\langle\Phi_{i1}-\sum_{q=1}^{d-1}F_q\Phi_{iq+1},\sum_{j=1}^{d-1}b_j\Phi_{ij+1}\big\rangle = \sum_{j=1}^{d-1}b_j\Xi_{1j+1}-\sum_{j=1}^{d-1}b_j\sum_{q=1}^{d-1}F_q\Xi_{q+1j+1} = 0,
    \end{align*}
    so we have 
    \[
    \sum_{i=1}^n\hat D_i(\theta_1^0)=\sum_{i=1}^n\hat M_i(\theta_1^0),
    \]
    where $\hat M_i(\theta_1^0)=\alpha^{-1}\langle\Phi_{i1}-\sum_{q=1}^{d-1}F_q\Phi_{iq+1},\hat R_i-H_i((\theta_1^0,\hat\theta_{(1)}^T)^T)\rangle$.
\end{proof}

\subsection{Proof of Proposition \ref{prop:2}}
\begin{proof}[of Proposition \ref{prop:2}]
    Since $\xi_i^{(g)}\sim N(0,1)$ and \eqref{eq:hatZi}, property (i) holds.
    
    To prove property (ii), we have 
    \begin{align*}
        \var\big(n^{-1/2}\sum_{i=1}^n\hat M_i(\theta_1^0,t)\xi_i^{(g)}\big)=n^{-1}\sum_{i=1}^n\var(\hat M_i(\theta_1^0,t)\xi_i^{(g)})=n^{-1}\sum_{i=1}^nE\hat M_i(\theta_1^0,t)^2,
    \end{align*}
    and 
    \begin{align*}
        \var\big(n^{-1/2}\sum_{i=1}^n\hat Z_i(\theta_1^0,t)\big) = & \var\big(n^{-1/2}\sum_{i=1}^n\hat D_i(\theta_1^0,t)\big) 
        =  n^{-1}\sum_{i=1}^n\var(\hat D_i(\theta_1^0,t)) + o(1) \\
        = & n^{-1}\sum_{i=1}^nE\hat M_i(\theta_1^0,t)^2 + o(1).
    \end{align*}
    Thus, we have property (ii).
    
    Since 
    \begin{align*}
        & \cov\big(n^{-1/2}\sum_{i=1}^n\hat M_i(\theta_1^0,s)\xi_i^{(g)},n^{-1/2}\sum_{j=1}^n\hat M_j(\theta_1^0,t)\xi_j^{(g)}\big) 
        = n^{-1}\sum_{i=1}^n\cov(\hat M_i(\theta_1^0,s)\xi_i^{(g)},\hat M_i(\theta_1^0,t)\xi_i^{(g)}) \\
        = & n^{-1}\sum_{i=1}^nE(\hat M_i(\theta_1^0,s)\hat M_i(\theta_1^0,t)),
    \end{align*}
    and 
    \begin{align*}
        & \cov\big(n^{-1/2}\sum_{i=1}^n\hat Z_i(\theta_1^0,s),n^{-1/2}\sum_{j=1}^n\hat Z_j(\theta_1^0,t)\big) 
        =  \cov\big(n^{-1/2}\sum_{i=1}^n\hat D_i(\theta_1^0,s),n^{-1/2}\sum_{j=1}^n\hat D_j(\theta_1^0,t)\big) \\
        = & n^{-1}\sum_{i=1}^n\cov(\hat D_i(\theta_1^0,s),\hat D_i(\theta_1^0,t))+o(1) 
        =  n^{-1}\sum_{i=1}^nE(\hat M_i(\theta_1^0,s)\hat M_i(\theta_1^0,t))+o(1),
    \end{align*}
    we see property (iii) holds.
\end{proof}

\bibliographystyle{apalike}
\bibliography{eltest}

\begin{thebibliography}{}

\bibitem[Burton et~al., 2013]{burton2013activity}
Burton, C., McKinstry, B., T{\u{a}}tar, A.~S., Serrano-Blanco, A., Pagliari,
  C., and Wolters, M. (2013).
\newblock Activity monitoring in patients with depression: a systematic review.
\newblock {\em Journal of Affective Disorders}, 145(1):21--28.

\bibitem[Chang and McKeague, 2016]{chang2016empirical}
Chang, H.-w. and McKeague, I.~W. (2016).
\newblock Empirical likelihood based tests for stochastic ordering under right
  censorship.
\newblock {\em Electronic Journal of Statistics}, 10(2):2511.

\bibitem[Krane-Gartiser et~al., 2014]{krane2014actigraphic}
Krane-Gartiser, K., Henriksen, T. E.~G., Morken, G., Vaaler, A., and Fasmer,
  O.~B. (2014).
\newblock Actigraphic assessment of motor activity in acutely admitted
  inpatients with bipolar disorder.
\newblock {\em PloS One}, 9(2):e89574.

\bibitem[Li and Pan, 2013]{li2013empirical}
Li, D. and Pan, J. (2013).
\newblock Empirical likelihood for generalized linear models with longitudinal
  data.
\newblock {\em Journal of Multivariate Analysis}, 114:63--73.

\bibitem[Owen, 1991]{owen1991empirical}
Owen, A. (1991).
\newblock Empirical likelihood for linear models.
\newblock {\em The Annals of Statistics}, pages 1725--1747.

\bibitem[Owen, 1988]{owen1988empirical}
Owen, A.~B. (1988).
\newblock Empirical likelihood ratio confidence intervals for a single
  functional.
\newblock {\em Biometrika}, 75(2):237--249.

\bibitem[Owen, 2001]{owen2001empirical}
Owen, A.~B. (2001).
\newblock {\em Empirical likelihood}.
\newblock CRC press.

\bibitem[Qin and Lawless, 1994]{qin1994empirical}
Qin, J. and Lawless, J. (1994).
\newblock Empirical likelihood and general estimating equations.
\newblock {\em the Annals of Statistics}, 22(1):300--325.

\bibitem[Wang et~al., 2010]{wang2010generalized}
Wang, S., Qian, L., and Carroll, R.~J. (2010).
\newblock Generalized empirical likelihood methods for analyzing longitudinal
  data.
\newblock {\em Biometrika}, 97(1):79--93.

\bibitem[Xue and Zhu, 2007]{xue2007empirical}
Xue, L. and Zhu, L. (2007).
\newblock Empirical likelihood semiparametric regression analysis for
  longitudinal data.
\newblock {\em Biometrika}, 94(4):921--937.

\bibitem[You et~al., 2006]{you2006block}
You, J., Chen, G., and Zhou, Y. (2006).
\newblock Block empirical likelihood for longitudinal partially linear
  regression models.
\newblock {\em Canadian Journal of Statistics}, 34(1):79--96.

\bibitem[Zou et~al., 2002]{zou2002empirical}
Zou, F., Fine, J., and Yandell, B. (2002).
\newblock On empirical likelihood for a semiparametric mixture model.
\newblock {\em Biometrika}, 89(1):61--75.

\end{thebibliography}

\end{document}